\def\ceil#1{\lceil #1 \rceil}
\mathchardef\mhyphen="2D 
\newtheorem{theorem}{Theorem}
\newtheorem{corollary}[theorem]{Corollary}
\newtheorem{lemma}[theorem]{Lemma}
\newtheorem{definition}[theorem]{Definition}
\newtheorem{claim}[theorem]{Claim}
\newtheorem{fact}[theorem]{Fact}
\newenvironment{proof-sketch}{\noindent{\bf Sketch of Proof:}\hspace*{1em}}{\qed\bigskip}
\newenvironment{proof-idea}{\noindent{\bf Proof Idea:}\hspace*{1em}}{\qed\bigskip}
\newenvironment{proof-of-lemma}[1]{\noindent{\bf Proof of Lemma #1:}\hspace*{1em}}{\qed\bigskip}
\newenvironment{proof-of-proposition}[1]{\noindent{\bf Proof of Proposition #1:}\hspace*{1em}}{\qed\bigskip}
\newenvironment{proof-attempt}{\noindent{\bf Proof Attempt:}\hspace*{1em}}{\qed\bigskip}
\newenvironment{proofof}[1]{\noindent{\bf Proof}
of #1:\hspace*{1em}}{\qed\bigskip}
\newcommand{\eps}{\varepsilon}
\newcommand{\tr}[1]{\mathrm{tr}(#1)} 
\newcommand{\Tr}{\mathrm{Tr}} 
\newcommand{\YES}{\mbox{\sc{yes}}}
\newcommand{\NO}{\mbox{\sc{no}}}
\newcommand{\pr}{\text{Pr}}
\newcommand{\tensor}{\otimes}
\newcommand{\bra}[1]{\langle #1|}
\newcommand{\ket}[1]{|#1\rangle}
\newcommand{\braket}[2]{\langle #1|#2\rangle}
\newcommand{\ketbra}[2]{\ket{#1}{\bra{#2}}}
\newcommand{\etal}{{\it et~al.\ }}
\newcommand{\IP}{\textsf{IP}}
\newcommand{\MIP}{\textsf{MIP}}
\newcommand{\PSPACE}{\textsf{PSPACE}}
\newcommand{\NP}{\textsf{NP}}
\newcommand{\NEXP}{\textsf{NEXP}}
\newcommand{\NPC}{\textsf{NP}\mhyphen\textsf{complete}}
\newcommand{\PP}{\textsf{PP}}
\newcommand{\MA}{\textsf{MA}}
\newcommand{\QMA}{\textsf{QMA}}
\newcommand{\QMAC}{\textsf{QMA-complete}}
\newcommand{\QMAH}{\textsf{QMA-hard}}
\newcommand{\QMAT}{\textsf{QMA(2)}}
\newcommand{\BELLQMAT}{\textsf{BELL-QMA(2)}}
\newcommand{\QMATC}{\textsf{QMA(2)-complete}}
\newcommand{\QMATH}{\textsf{QMA(2)-hard}}
\newcommand{\LL}{{\cal L}}
\newcommand{\HH}{{\cal H}}
\newcommand{\GROUPNONMEMBERSHIP}{\textsc{group-non-membership}}
\newcommand{\lh}{\textsc{local }\textsc{hamiltonian}}
\newcommand{\klh}{k\mhyphen\textsc{local }\textsc{hamiltonian}}
\renewcommand{\klh}{k-\textsc{local }\textsc{hamiltonian}}
\newcommand{\sklh}{\textsc{separable }k\mhyphen\textsc{local }\textsc{hamiltonian}}
\renewcommand{\sklh}{\textsc{separable }k-\textsc{local }\textsc{hamiltonian}}
\newcommand{\sh}{\textsc{sparse } \textsc{hamiltonian}}
\newcommand{\slh} {\textsc{separable }\textsc{local }\textsc{hamiltonian}}
\newcommand{\ssh} {\textsc{separable }\textsc{sparse }\textsc{hamiltonian}}
\newcommand{\cldm}{\textsc{consistency }\textsc{of }\textsc{local }\textsc{density }\textsc{matrices}}
\newcommand{\purenrepresentability}{\textsc{pure }N-\textsc{representability}}
\newcommand{\sat}{\textsc{sat}}
\newcommand{\maxksat}{\textsc{max}\mhyphen k\mhyphen\textsc{sat}}
\newcommand{\onote}[1]{\textcolor{blue}{ {\textbf{(Or:}
#1\textbf{)}}}}
\renewcommand{\onote}[1]{}
\newcommand{\anote}[1]{\textcolor{blue} {{\textbf{(Andre:}}
#1\textbf{)}}}
\renewcommand{\anote}[1]{}
\newcommand{\fullversion}[1]{}
\renewcommand{\fullversion}[1]{#1}
\newcommand{\abstractversion}[1]{}
\definecolor{gray}{rgb}{0.5,0.5,0.5}
\newcommand{\cadre}[1]
{
\begin{tabular}{|p{15.4cm}|}
\hline
#1 \\
\hline
\end{tabular}
}
\newcommand{\spa}[1]{\mathcal{#1}}
\newcommand{\COMMENT}[1]{}
\title{The Complexity of the Separable Hamiltonian Problem}
 \author{Andr\'{e} Chailloux\thanks{LIAFA - Universit\'e Paris 7 and UC Berkeley
}
 \and
 Or Sattath
 \thanks{School of Computer Science and Engineering,
 The Hebrew University, Jerusalem, Israel.
 Supported by the Clore Fellowship program, Julia Kempe's Individual Research Grant of the Israeli Science Foundation and by Julia Kempe's European Research Council (ERC) Starting Grant.}
 }
\begin{document}
\maketitle

\begin{abstract}
In this paper, we study variants of the canonical \lh\ problem where, in addition, the witness is promised to be separable.
We define two variants of the \lh\ problem. The input for the \slh\ problem is the same as the \lh\ problem, i.e. a local Hamiltonian and two energies $a$ and $b$, but the question is somewhat different: the answer is $\YES$ if there is a \emph{separable} quantum state with energy at most $a$, and the answer is $\NO$ if \emph{all separable} quantum states have energy at least $b$.  The \ssh\ problem is defined similarly, but the Hamiltonian is not necessarily local, but rather sparse. We show that the \ssh\ problem is $\QMATC$, while \slh\ is in $\QMA$. This should be compared to the \lh\ problem, and the \sh\ problem which are both $\QMAC$. To the best of our knowledge, \ssh\ is the first non-trivial problem shown to be $\QMATC$. \onote{I added that this is the first $\QMATC$ problem.Is there any reason to believe that it is not correct?).} 
\end{abstract}

\section{Introduction and Results}
\subsection{Introduction}
The class $\QMA$ is the the quantum analogue of the class $\NP$ (or more precisely, $\MA$). The class was first studied by Kitaev~\cite{KSV02}, and has been in focus since: see \cite{aharonov2002qns} for a survey, and \cite{osborne2011hamiltonian} for a more recent physics-motivated review. \fullversion{
First, Watrous showed that $\GROUPNONMEMBERSHIP $ is in $\QMA$~\cite{Wat00} (this problem is not known to be in $\MA$). Then, after a series of works, Kempe, Kitaev and Regev showed that the 2-$\lh$ problem is $\QMA$-complete~\cite{KKR04}. Mariott and Watrous also proved a strong amplification result of $\QMA$~\cite{MW05}. More recently, Aharonov $\etal$ tried to extend the celebrated PCP theorem to the quantum case~\cite{AALV09}. $\QMA$, as the quantum equivalent of $\NP$, is one of the most studied classes in quantum complexity.
}

One of the striking results in proof systems is that sometimes, \emph{limiting} the prover can \emph{increase} the power of the proof system. For example $\IP = \PSPACE$ \cite{lund1992algebraic,shamir1992ip}, while $\MIP = \NEXP$ \cite{babai1991non}. This means that two classical provers can prove more languages to a verifier if it is guaranteed that the provers cannot communicate with each other. However, these classical examples require interaction between the prover and the verifier. The class $\QMA(k)$, introduced by Kobayashi \etal\cite{kobayashi2003quantum}, deals with quantum non-interactive proofs and limits the prover to send $k$ non-entangled proofs, or equivalently $k$-unentangled provers that cannot communicate with each other. The question whether $\QMA(k) = \QMAT$ was answered in the affirmative by Harrow and Montanaro~\cite{harrow2010efficient}. The question whether $\QMAT \subseteq \QMA$ is still open. Note that in the classical case, $\MA(k) = \MA(2) = \MA$. 

To show the power of unentangled quantum proofs, Blier and Tapp~\cite{BT09} first presented a $\QMAT$ protocol for an $\NPC$ problem with two quantum witnesses of size $O(\log(n))$. The drawback of this protocol is that the soundness parameter is somewhat disappointing (1 - $\Omega(1/n^6)$)\onote{I replaced low with bad. The smaller the parameter $s$ is the better. the meaning here is that $s$ is very close to 1. Low soundness error sounds like a good thing to me...}. This was first improved by Beigi~\cite{Bei10} who showed that the soundness can be reduced to $1 - 1/n^{3 + \eps}$ for any $\eps > 0$. Very recently, Le Gall improved this soundness to $1 - \Omega(\frac{1}{n\log(n)})$~\cite{LNN11}. Aaronson \etal showed that there exists a short proof for $\sat$ in $\QMA(\tilde{O}(\sqrt{n}))$~\cite{ABD+08}, where each unentangled witness has logarithmic size, but where the soundness can be exponentially small. In \cite{harrow2010efficient} it was shown that $\sat \in \QMA(2)$, where the size of each proof is $\tilde{O}(\sqrt{n})$. These results tend to show that quantum unentangled proofs are very powerful, since they can solve $\NPC$ problems in a seemingly more efficient way than in $\QMA$. Liu \etal have shown that $\purenrepresentability$, an important problem in quantum chemistry, is in $\QMAT$\cite{liu2007quantum}. This problem is not known to be in $\QMA$.

On the other hand, Brand\~ao~\etal~\cite{BCY11} showed that if the verifier is restricted to performing a Bell measurement\onote{is this correct? I don't recall whether there is only one bell measurement, or multiple bell measurements...}, then, the resulting class $\BELLQMAT$ is equal to $\QMA$. Trying to understand the relationship between $\QMA$ and $\QMAT$ is a fundamental open problem from the point of view of quantum complexity as well as for the understanding of the power of quantum unentangled proofs.

\subsection{Contribution}
In this paper, we study the relationship between $\QMA$ and $\QMA(2)$ from a different perspective. We study the \lh\ problem with unentangled witnesses. The $\klh$ \fullversion{(see Def.~\ref{def:lh})} problem is the quantum analog of $\maxksat$, and is the canonical $\QMAC$ problem.
The first proof that $\klh$ is $\QMAC$ is by Kitaev. Our first result is to extend this construction to separable witnesses in order to find a complete problem for $\QMAT$.
The main ingredient in showing that the $\klh$ problem is $\QMAC$, is Kitaev's Hamiltonian, a Hamiltonian which penalizes states that are not history states. History states are states of the form $\ket{\eta_\psi}\equiv \frac{1}{\sqrt{T+1}}\sum_{t=0}^T \ket{t} \tensor \ket{\psi_t}$, where $\ket{\psi_t}$ is the state at the $t$-th step of the verification process when starting with $\ket{\psi}$ and the  the $m$ ancilla qubits in 0 state, i.e. $\ket{\psi_t} = U_t U_{t-1}\ldots U_0 (\ket{0^m}  \tensor \ket{\psi})$, and $U_i$ is the $i$-th gate used in the $\QMA$ verification circuit, and we set as a convention $U_0 = I$. 

It is natural to try to adapt this argument to a $\QMAT$ verification circuit by constructing a \slh\ problem: the input for the \slh\ problem is the same as the \lh\ problem, i.e. a collection of local Hamiltonians $\{H_1,\ldots,H_m\}$, the answer is $\YES$ if there is a \emph{separable} quantum state with energy at most $a$, and the answer is $\NO$ if \emph{all separable} quantum states have energy at least $b$ for some energies $a < b$. Yet, there is a flaw in this argument: even if $\ket{\psi}= \ket{\chi_A}\tensor\ket{\chi_B}$, the history state $\ket{\eta_\psi}$ might not be separable. \fullversion{This is caused by two reasons:(i) even though $\ket{\psi_0} = \ket{\chi_A}\tensor\ket{\chi_B}$ is a tensor product, for $t>0$, $\ket{\psi_t}$ can be entangled, and (ii) even if $\ket{\psi_t}$ is not entangled, the fact that $\ket{\eta_{\psi}}$ is a superposition over all time steps creates entanglement, as long as both parts of the proof change during the computation.}

In order to resolve the entanglement issue in $\ket{\eta_\psi}$, we use the construction of Harrow and Montanaro~\cite{harrow2010efficient}. 
They show that every $\QMA(k)$ verification circuit can be transformed into a $\QMAT$ circuit with the following structure: The first and second witnesses(which are promised to be non-entangled) have the same length, where each witness contains $r$ registers, where each register size, in the first and second witnesses, is the same. The first $r$ steps of the verification procedure are swap-tests between the $i$-th register of the first and second witnesses, and from that point, the verification circuit acts non-trivially only on the first witness.
In a $\YES$ instance, there exists a non-entangled proof, where $\ket{\chi_A} = \ket{\chi_B} = \ket{\chi_1}\tensor\ket{\chi_2} \tensor \ldots \tensor \ket{\chi_r}$.
Notice that $C-SWAP (\ket{+} \tensor \ket{\phi} \tensor \ket {\phi}) = \ket{+} \tensor \ket{\phi} \tensor \ket {\phi}$, therefore, applying the swap-tests to the above witnesses does not change the state. Since there are no other operations on the second witness, the second witness remains fixed during the entire verification process. If we treat the clock, ancilla qubits and the first witness as the $A$ system, and the second witness as the $B$ system, we get that the history state $\ket{\eta}$ is indeed separable with respect to this division. This is only true if the controlled swap operation is applied on \emph{all} the qubits in the $i$-th register of the first and second witnesses. This will make \onote{new}the propagation terms\fullversion{\footnote{See Eq.~\eqref{eq:prop} for the definition.}} in Kitaev's Hamiltonian non-local. But, on the other hand, a controlled swap operation on arbitrary number of qubits is always sparse: each row has one non-zero entry. This makes each propagation term sparse.

Given a sparse Hamiltonian $H$, the unitary $U = \exp(-iHt)$ can be implemented efficiently, which eventually leads to $\ssh \in \QMAT$. Together with the idea above, it can be shown that:
\begin{theorem}
\label{thm:ssh_qmatc}
$\ssh$ is $\QMATC$.
\end{theorem}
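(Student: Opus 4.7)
\begin{proof-sketch}
The proof has two directions. For the upper bound $\ssh\in\QMAT$, the verifier asks the two unentangled provers for $\ket{\chi_A}$ and $\ket{\chi_B}$, forms $\ket{\chi_A}\tensor\ket{\chi_B}$, and estimates its energy under $H$. Because $H$ is sparse with polynomially bounded norm, $e^{-iHt}$ can be implemented efficiently by standard sparse Hamiltonian simulation, so phase estimation on $H$ can be performed in quantum polynomial time; the verifier then accepts iff the estimate lies below $(a+b)/2$. The $1/\poly(n)$ promise gap translates directly into the standard completeness and soundness parameters.

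For the hardness direction, the plan is to apply Kitaev's circuit-to-Hamiltonian reduction to a $\QMAT$ verifier first put into the normal form of Harrow and Montanaro~\cite{harrow2010efficient}. Given a $\QMAT$ verifier $V$, I replace it by the equivalent verifier $V'$ whose two witnesses have equal length and are partitioned into $r$ matched registers: the first $r$ gates are controlled swap-tests between the $i$-th registers of the two witnesses, and the remaining gates act only on the first witness and the ancillas. To $V'$ I associate Kitaev's Hamiltonian $H=H_{in}+H_{prop}+H_{out}$ over clock $+$ ancilla $+$ first witness $+$ second witness. All terms are local except the propagation terms for the swap-test gates: a controlled swap on a whole register is not local, but it is a permutation and hence $1$-sparse, so the overall $H$ is sparse with polynomially many terms.

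For completeness, take the Harrow--Montanaro YES witness $\ket{\chi_A}=\ket{\chi_B}=\bigotimes_i\ket{\chi_i}$. Because a controlled swap applied to $\ket{+}\ket{\phi}\ket{\phi}$ is trivial, the second witness is preserved throughout the computation, so the Kitaev history state factorises as
\[
\ket{\eta}=\Bigl(\tfrac{1}{\sqrt{T+1}}\sum_t\ket{t}\tensor\ket{\psi_t^A}\Bigr)\tensor\ket{\chi_B},
\]
which is separable across the cut $A=(\text{clock},\text{ancilla},\text{first witness})$ versus $B=(\text{second witness})$ and has Kitaev energy bounded by $a$.

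The main obstacle will be soundness, since the standard Kitaev analysis bounds only the unconstrained ground energy of $H$ by the maximum acceptance probability over \emph{all} witnesses, whereas here I need a lower bound on the minimum energy over \emph{separable} states in terms of the maximum acceptance probability over separable witnesses. The plan is to adapt the Kitaev geometric argument by conditioning on the $B$ subsystem: for a candidate separable state $\sigma_A\tensor\sigma_B$ the energy equals $\Tr(\sigma_A\,H_{\mathrm{eff}}(\sigma_B))$ where $H_{\mathrm{eff}}(\sigma_B)=\Tr_B((I\tensor\sigma_B)\,H)$ is an effective Hamiltonian on $A$ parametrised by $\sigma_B$. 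Because the swap-test gates of $V'$ act trivially precisely on matched-register inputs, the Harrow--Montanaro normal form should allow one to unfold any low-energy state of $H_{\mathrm{eff}}(\sigma_B)$ into an effective separable witness $\ket{\tilde\chi_A}\tensor\ket{\tilde\chi_B}$ of $V'$ whose acceptance probability exceeds the $\QMAT$ NO threshold, contradicting the promise. Making this unfolding quantitative, so that a $1/\poly(n)$ energy gap translates into a $1/\poly(n)$ soundness gap, is the delicate point and is where the full use of the Harrow--Montanaro structure enters.
\end{proof-sketch}
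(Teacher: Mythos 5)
Your upper bound and your completeness argument follow the paper's route, but the protocol you describe for membership has a bug: you run phase estimation on $H$ applied to $\ket{\chi_A}\tensor\ket{\chi_B}$ and accept iff the outcome is below $(a+b)/2$. The acceptance probability of that test is the weight of the witness on the low part of the \emph{spectrum} of $H$, which the promise does not control: in a \NO\ instance every \emph{separable} state has energy at least $b$, but $H$ may still have small eigenvalues with entangled eigenvectors, and a separable witness can place constant weight on them, so the threshold test can accept with constant probability. (For the ordinary \lh\ problem one escapes this by having the prover send the ground state, an eigenvector; here the energy minimizer over separable states need not be an eigenvector.) The standard fix, and what the paper does, is to make the acceptance probability a \emph{linear} function of the energy: pick a term $H_j$ at random, phase-estimate $\exp(iH_j)$, and reject with probability equal to the measured phase, so that $\pr(\text{accept})\approx 1-\frac{1}{m}\bra{\psi}H\ket{\psi}$ and the $1/\poly$ energy gap becomes a $1/\poly$ completeness--soundness gap.

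The larger gap is the soundness of the hardness reduction, which you correctly identify as the crux but leave as a plan. Your proposal to work with $H_{\mathrm{eff}}(\sigma_B)=\Tr_B\bigl((I\tensor\sigma_B)H\bigr)$ runs into a concrete obstruction: for the swap-test propagation terms, $\Tr_{P_2}\bigl((I\tensor\sigma_B)\,U_t\bigr)$ is not unitary, so $H_{\mathrm{eff}}(\sigma_B)$ is not of Kitaev propagation form, its low-energy space need not consist of history states, and the ``unfolding'' of a low-energy state into a separable witness with high acceptance probability is exactly what remains to be proved. The paper takes a different and fully quantitative route in three lemmas: (i) a low-energy state $\ket{\omega_1}\tensor\ket{\omega_2}$ must be close to \emph{some} history state $\ket{\eta_\psi}$, by lower-bounding the spectral gap of $H_{in}+H_{prop}$ by $\Omega(1/(T+1)^3)$; (ii) if a history state is $\eps$-close to a product state across the $(C,A,P_1)$-vs-$P_2$ cut, then the underlying witness $\ket{\psi}$ is $\eps(T+1)$-close to a product state across $P_1$-vs-$P_2$ --- this is the key new lemma, and it exploits the fact that the $t=0$ term of the history state carries amplitude $1/\sqrt{T+1}$ and exposes $\ket{\psi}$ directly, at the cost of the $(T+1)$ blow-up; (iii) a history state whose witness is close to separable must have energy $\Omega(1/(T+1))$ in a \NO\ instance, by the usual Kitaev output-penalty argument plus a fidelity-to-trace-distance bound. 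Chaining these with completeness parameter $1-O(1/(T+1)^4)$ (available by Harrow--Montanaro amplification) yields the contradiction. Without an argument playing the role of step (ii), your sketch does not establish soundness.
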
    

The only reason why, this construction does not lead to a \slh\ instance, is that the controlled swap gate must be performed in one step; otherwise, $\ket{\eta}$ would become entangled.
At first glance, this might seem as a technicality, but we, surprisingly, show that\onote{Isn't this a little arrogant?}:
\begin{theorem}
\label{thm:slh_in_qma}
\slh\ is $\QMAC$.
\end{theorem}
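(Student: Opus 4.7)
The plan is to establish $\slh \in \QMA$ (the interesting direction) and then observe that $\QMAH$-ness is routine. The in-$\QMA$ direction reduces $\slh$ to a constrained variant of the $\cldm$ problem, which Liu's protocol places in $\QMA$. The crux is the following elementary structural observation. Suppose $\{\rho_i\}$ are density matrices on local sets $S_i = A_i \cup B_i$ (with $A_i \subseteq A$, $B_i \subseteq B$) such that (a) each $\rho_i = \sigma_i \otimes \tau_i$ is a product across the $A_i$--$B_i$ cut, and (b) there exists a global state $\rho$ on $A \cup B$ whose marginal on each $S_i$ is $\rho_i$. Then there is in fact a \emph{separable} global state with the same local marginals, namely $\Tr_B(\rho) \otimes \Tr_A(\rho)$: its marginal on $S_i$ equals $(\Tr_B \rho)_{A_i} \otimes (\Tr_A \rho)_{B_i}$, which because partial traces commute reduces to $\Tr_{B_i}(\rho_i) \otimes \Tr_{A_i}(\rho_i) = \sigma_i \otimes \tau_i$. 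Since the energy depends only on the $k$-local marginals, the reconstructed separable state attains energy exactly $\sum_i \Tr(H_i \rho_i)$.

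Given the observation, the $\QMA$ protocol I would design proceeds as follows. The prover commits, in a computational-basis portion of the witness, to polynomial-bit classical descriptions of the $2^k \times 2^k$ marginals $\{\rho_i\}$, and additionally sends a quantum state $\sigma$ on $A \cup B$ as a consistency certificate. The verifier (i) measures the classical portion to extract $\{\rho_i\}$; (ii) verifies classically that each $\rho_i$ factors as a product across $A_i$--$B_i$ and that $\sum_i \Tr(H_i \rho_i) \leq a$; and (iii) runs Liu's $\QMA$ protocol for $\cldm$ on $\sigma$ against the claimed marginals. Completeness is immediate, since in a $\YES$ instance the prover sends the marginals of an optimal separable state together with that state itself. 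For soundness in a $\NO$ instance, any $\{\rho_i\}$ that passes the two classical checks must be inconsistent with every global state --- otherwise the structural observation would reconstruct a separable state of energy $\leq a < b$ --- so Liu's $\cldm$ check rejects $\sigma$ with inverse-polynomial probability, amplifiable to constant.

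The main obstacle is precision and gap bookkeeping. Liu's $\cldm$ protocol provides only an inverse-polynomial completeness--soundness gap that degrades with the bit-precision of the marginals and with $\max_i \|H_i\|$. To make the composition go through I would (a) fix polynomial bit-precision for the $\{\rho_i\}$; (b) relax the exact-product classical test to an approximate-product test, supported by a short continuity argument showing that if the $\{\rho_i\}$ are $\varepsilon$-close to products and $\varepsilon$-close to consistent with some $\sigma$, then $\Tr_B(\sigma) \otimes \Tr_A(\sigma)$ is a separable state whose energy differs from $\sum_i \Tr(H_i \rho_i)$ by at most a polynomial multiple of $\varepsilon$; and (c) amplify by standard $\QMA$ repetition so the final gap exceeds $b - a$. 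The $\QMAH$-ness direction is routine: reduce from $\klh$ by letting the $B$-register consist of a single ancilla qubit and adding a penalty $\lambda(I - \ket{0}\bra{0})_B$ for large $\lambda$, so that the separable optimum coincides with the $\klh$ optimum up to a known additive constant.
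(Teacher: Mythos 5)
Your proposal is correct and follows the same overall route as the paper: the witness is a classical description of the $k$-local product marginals plus a quantum consistency certificate checked via Liu's \cldm\ protocol, and the verifier computes the energy classically from those marginals. The one genuine difference is in how consistency is certified. The paper has the prover send the two factors $\rho^{A_i}$ and $\rho^{B_i}$ separately and runs \emph{two} independent \cldm\ instances --- one for $\{\rho^{A_i}\}$ against a state on $\spa{A}$, one for $\{\rho^{B_i}\}$ against a state on $\spa{B}$ --- so that soundness is immediate: the two recovered states $\sigma_A,\sigma_B$ directly yield the product state $\sigma_A\tensor\sigma_B$ whose energy is controlled by the claimed marginals. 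You instead send the joint marginals $\rho_i$ on $C_i$, add a classical product test, and run a \emph{single} \cldm\ instance on $A\cup B$; this obliges you to prove the extra productization lemma that $\Tr_B(\sigma)\tensor\Tr_A(\sigma)$ reproduces the marginals $\sigma^{A_i}\tensor\sigma^{B_i}$. That lemma is correct (and its approximate version, which you rightly flag as necessary because \cldm\ is a promise problem guaranteeing only $\beta$-closeness, is the same continuity bookkeeping the paper does with $\beta=(b-a)/(8m)$), but it is precisely the step the paper's split into two \cldm\ calls avoids. The trade-off: your version needs one fewer invocation of \cldm\ and makes explicit the nice structural fact that consistent product local marginals always admit a globally \emph{product} extension, while the paper's version is mechanically simpler. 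For hardness, your penalty term $\lambda(I-\ketbra{0}{0})_B$ is unnecessary: the paper simply takes all $B_i=\emptyset$, so the separable optimum trivially coincides with the \klh\ optimum.
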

Since the \slh\ problem is at least as hard as the \lh\ problem, and \lh\ is $\QMAC$, therefore \slh\ is $\QMA$-hard.
To show that $\slh \in \QMA$, we use the \cldm\ problem~\cite{liu_consistency_2006} as a subroutine. Informally, the \cldm\  promise problem asks the following question: given a collection of local density matrices $\rho_i$ over a constant set of qubits $C_i$, is there a quantum state $\rho$ such that for each $i$, the reduced density matrix of $\rho$ over the qubits $C_i$ is equal to $\rho_i$? Liu showed that this problem is $\QMAC$.

To show that \slh\ is $\QMAC$, we do as follows. Assume that there exists a state $\sigma = \sigma_A \tensor \sigma_B$ of total length $2n$, with energy below the threshold $a$. Let $\spa{A},\spa{B}$ the two spaces of qubits considered, each of size $n$. The energy is $\tr{H (\sigma_A \tensor \sigma_B)}$ where $H = \sum_i H_i$. Let $C_i$ the subset of qubits each $H_i$ act on. \onote{I changed $\sigma_{C_{i}}$ to $\sigma^{C_{i}}$. This is the way used in Nielsen and Chuang.}We have $\tr{H (\sigma_A \tensor \sigma_B)} = \sum_{i=1}^m \tr{H_i \sigma^{C_i}}$, where $\sigma^{C_i}$ corresponds to the reduced state of $\sigma$ on the qubits of $C_i$. Again, we can decompose $\sigma^{C_i}$ into the A part and the B part. We can write $\sigma^{C_i} = \sigma^{A_i} \otimes \sigma^{B_i}$. This is because the state $\sigma$ is a product state between $\spa{A}$ and $\spa{B}$, hence, the state $\sigma^{C_i}$ is also a product state between $\spa{A}$ and $\spa{B}$.

The proof will consist of a classical part: the classical description of the reduced density matrices $\sigma^{A_i},\sigma^{B_i}$. This information is sufficient to calculate the energy classically, using $\tr{H (\sigma_A \tensor \sigma_B)} = \sum_{i=1}^m \tr{H_i (\sigma^{A_i} \otimes \sigma^{B_i})}$.
The proof also consists of a quantum part: the prover tries to convince the verifier that there exists a quantum mixed state $\rho_A$ and similarly for $\rho_B$ that are consistent with the reduced density matrices $\sigma^{A_i}$ and $\sigma^{B_i}$. Since \cldm\ is known to be in $\QMA$, the prover can convince the verifier if there exists such a state, but cannot fool the verifier if there is no such state. 

\paragraph{Discussion}
\abstractversion{In the setting of $\QMA$, both the \lh\ and the \sh\ are natural $\QMAC$ problems. When we consider separable witnesses, \slh\ and \ssh\ seem to be natural $\QMATC$ problem. Theorem~\ref{thm:ssh_qmatc} proves that \ssh\ is indeed $\QMATC$, in sharp contrast to the \slh\ problem, which is shown to be in $\QMA$, by Theorem~\ref{thm:slh_in_qma}.
}

\fullversion{
In the case, where we do not consider separable witnesses, the two problems \lh\ and \sh\ are natural $\QMAC$ problems. Thus, in this setting, considering sparse Hamiltonians instead of local Hamiltonians does not increase the power of the verifier.

When we consider separable witnesses, things are different. \slh\ and \ssh\ seem to be natural $\QMATC$ problems. With Theorem~\ref{thm:ssh_qmatc}, we show that \ssh\ is indeed $\QMATC$ by adapting Kitaev's completeness and using the new construction from Harrow and Montanaro~\cite{harrow2010efficient}. However, we were not able to remove this sparseness condition to show that \slh\ is also $\QMATC$.

On the other hand, we show that \slh\ is $\QMAC$. We find this surprising because \slh\ was a natural candidate for a $\QMATC$ problem. This also means that when considering separable witnesses, the sparse condition for Hamiltonians is crucial or conversely that separable witnesses do not help a verifier when his accepting procedure is a sum of local Hamiltonians. 
This is in sharp contrast with the general case where separable witnesses seem to help the verifier significantly\onote{notice that the ``general case'' was used the opposite way before. I think that this is fine.}.
While we do not have a clear separation between $\QMA$ and $\QMAT$, we know that \onote{old: $\QMA \in \PP \in \PSPACE$} $\QMA \subseteq \PP \subseteq \PSPACE$ (first unpublished proof by Kitaev and Watrous then simplified in ~\cite{MW05}) while we only know that \onote{old:$\QMA \in \QMA(2) \in \NEXP$}$\QMA \subseteq \QMA(2) \subseteq \NEXP$~\cite{kobayashi2003quantum}.

Our results characterize rather tightly the difference between $\QMA$ and $\QMAT$. We hope that this will lead to a better understanding of the relationship between these classes.}

\fullversion{
\paragraph{Structure of the paper:}  Section~\ref{sec:definitions} contains the preliminaries and definitions. In Section~\ref{sec:sparse}, we show that \ssh\ is $\QMATC$ (Theorem~\ref{thm:ssh_qmatc}). In Section~\ref{sec:local}, we show that \slh\ is $\QMAC$ (Theorem~\ref{thm:slh_in_qma}).

\section{Preliminaries and Definitions}
\label{sec:definitions}
\begin{definition}
\label{def:qmat}
 A promise problem $L=\{L_{yes},L_{no}\}$ is in $\QMA_{s,c}(k)$ if there exists a uniformly generated polynomial time quantum algorithm $\mathcal{A}$ and computable polynomially bounded functions $f_1,\ldots,f_k$ such that for all input $x \in \{0,1\}^n$:
 \begin{enumerate}
	\item {\bf Completeness:} if $x \in L_{yes}$ there exist $k$ witnesses $\ket{\psi_1}, \ldots,\ket{\psi_k}$, where each witness $\ket{\psi_i}$ consists of $f_i(n)$ qubits such that $\mathcal{A}$ accepts $\ket{x} \tensor \ket{\psi_1} \tensor \ldots \tensor \ket{\psi_k}$ with probability at least $c$.
	\item {\bf Soundness:} if $x \in L_{no}$ then for all $k$ witnesses $\ket{\psi_1},  \ldots, \ket{\psi_k}$, where each witness $\ket{\psi_i}$ consists of $f_i(n)$ qubits the probability that $\mathcal{A}$ accepts $\ket{x} \tensor \ket{\psi_1} \tensor \ldots \tensor \ket{\psi_k}$ is at most $s$.
\end{enumerate}
\end{definition}
We define $\QMA_{\frac{1}{3},\frac{2}{3}}(k) = \QMA(k)$, and $\QMA=\QMA(1)$.

\begin{theorem}[\cite{harrow2010efficient}]
\label{thm:qma_k_equals_qma_2}
If $c-s \geq 1/poly(n)$, $k=poly(n)$, and $p(n)$ is an arbitrary polynomial, then $\QMA_{s,c}(k) = \QMA_{2^{-p(n)},1-2^{-p(n)}}(2)$.

Furthermore, it can be assumed w.l.o.g. that the $\QMAT$ protocol has the following structure: The two witnesses have exactly the same size, where both of them consist of $r$ registers of sizes $s_1,\ldots,s_r$. The verification process consists of applying the product test (see Def.~\ref{def:product_test}). If the product test fails, Arthur rejects. Otherwise, Arthur runs a polynomial quantum algorithm $\mathcal{A}$ on the first proof, and outputs the result. In a $\YES$ instance, the two Merlins can send identical states, which are tensor product between the $r$ registers: $\ket{\psi_1}=\ket{\psi_2} = \ket{\chi_1}\tensor \ldots \tensor \ket{\chi_r}$.
\end{theorem}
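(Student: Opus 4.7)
The plan is to prove the nontrivial direction $\QMA_{s,c}(k) \subseteq \QMA_{2^{-p(n)},1-2^{-p(n)}}(2)$ while simultaneously imposing the canonical structure; the reverse inclusion is immediate since any $\QMA(2)$ protocol can be run as a $\QMA(k)$ protocol that ignores the extra $k-2$ witnesses. The overall strategy has two stages: first, amplify the gap of the $\QMA(k)$ protocol while keeping $k$ provers; then collapse $k$ to $2$ provers via the product test, paying at most a polynomial loss in the gap; and finally re-amplify within $\QMA(2)$ to exponential precision.

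First I would apply parallel repetition inside $\QMA(k)$: each Merlin $i$ sends $t$ copies of its witness, which the verifier treats as $t$ registers of size $f_i(n)$, runs $t$ independent copies of $\mathcal{A}$, and accepts only if a majority accept. Since the Merlins are restricted to unentangled witnesses, a convexity argument shows the optimal cheating strategy is a tensor product across the $t$ copies, so standard Chernoff-type bounds drive the completeness to $1-2^{-p(n)}$ and the soundness to $2^{-p(n)}$, still using $k$ provers.

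The heart of the argument is to simulate the $k$ unentangled witnesses by only $2$ unentangled witnesses using the product test. Merge all registers of the amplified $k$-prover protocol into a single list of $r$ registers of sizes $s_1,\ldots,s_r$, and ask two Merlins to each send a state of this total size. The verifier applies a controlled-\textsc{swap} between the $j$-th register of Merlin $1$ and Merlin $2$ for each $j=1,\ldots,r$, rejecting if any swap test fails, and otherwise runs the inner amplified verifier on Merlin $1$'s state alone. In a \YES\ instance, the honest strategy sends $\ket{\chi_1}\tensor\cdots\tensor\ket{\chi_r}$ to both Merlins, which passes every swap test deterministically and is accepted with probability $1-2^{-p(n)}$. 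The main obstacle is the soundness analysis, which requires the product-test lemma: if the joint state of the two Merlins passes the product test with probability at least $1-\eps$, then its $\spa{A}$-marginal is close (in a polynomial in $\eps$) to a convex combination of product states of the form $\ket{\chi_1}\tensor\cdots\tensor\ket{\chi_r}$. Proving this requires a careful decomposition in the symmetric and antisymmetric subspaces of the swap operator acting simultaneously across all $r$ registers, and a calculation relating the failure probability of the swap tests to the distance from the nearest tensor-product state; the delicate point is to rule out entangled cheating strategies that correlate the two Merlins' registers in such a way that each individual swap test succeeds with high probability while the global state is still far from product.

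Combining the product-test soundness with the exponentially small soundness of the inner amplified $k$-prover protocol yields a $\QMA(2)$ protocol whose gap is at least $1/\text{poly}(n)$. A final round of amplification, either in-place via Marriott--Watrous-style techniques for $\QMA(2)$ or by an outer layer of parallel repetition across fresh blocks of registers (whose soundness is again controlled by the same product-test lemma), boosts the parameters to $1-2^{-p(n)}$ versus $2^{-p(n)}$, yielding the stated gap together with the canonical protocol structure.
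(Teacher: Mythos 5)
You should first note that the paper does not prove this theorem at all: it is quoted from Harrow and Montanaro \cite{harrow2010efficient}, so the benchmark is their proof. Your middle stage --- two Merlins each send the concatenation of all registers, Arthur runs the product test and, if it passes, the original verifier on the first witness, with soundness resting on the product-test theorem (states passing the test with probability $1-\eps$ are close to product states) --- is exactly the Harrow--Montanaro mechanism, and you correctly identify that theorem as the technical heart, although you only assert it.

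The genuine gap is in your first amplification stage. You have each of the $k$ Merlins send $t$ copies of its witness and claim that ``a convexity argument shows the optimal cheating strategy is a tensor product across the $t$ copies,'' so that Chernoff applies. This is false. The $\QMA(k)$ promise only forbids entanglement \emph{between} distinct Merlins; a single Merlin's witness is an arbitrary pure state on $t\cdot f_i(n)$ qubits and may be entangled across its $t$ registers. Linearity reduces the optimum to pure states $\ket{\psi_1}\tensor\cdots\tensor\ket{\psi_k}$, but nothing forces each $\ket{\psi_i}$ to factorize across copies, and the soundness hypothesis bounds the acceptance operators only on across-Merlin product states rather than in operator norm, so the commuting-operators/Chernoff argument that works for $\QMA(1)$ repetition does not go through. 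This is precisely the notorious obstruction to naive error reduction for $\QMA(2)$, and it is what the Harrow--Montanaro construction is built to circumvent: they repeat the protocol on $t$ \emph{disjoint groups of $k$ fresh Merlins} (so the repetitions are genuinely unentangled and Chernoff applies), obtaining a $\QMA(kt)$ protocol, and only then collapse the prover count back to $2$ with the product test; handling the soundness side requires a further repetition-and-collapse round rather than a single pass. For the same reason your fallback, ``in-place Marriott--Watrous-style amplification for $\QMA(2)$,'' is not available --- it is not known to work for $\QMA(2)$, because the intermediate states in that procedure need not remain product across the two Merlins. As written, your first and last stages do not establish the claimed parameters.
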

\begin{definition}[Product Test\cite{harrow2010efficient}]
\label{def:product_test}
The input consists of two states, where each state has $r$ registers of size $s_1,\ldots,s_r$. 
Preform the swap test on each of the $r$ pairs. Accept if all of the swap-tests pass, otherwise reject.
\end{definition}

\begin{definition}[\klh\ problem]
\label{def:lh} 
Input: a set of hermitian matrices $H_1,\ldots,H_m$, where each matrix operates on a set of at most $k$ out of the $n$ qubits, and $I \succeq H_i \succeq 0$ (i.e. both $H_i$ and $I-H_i$ are positive semi definite), and two real number $a$ and $b$ such that $b-a > poly(1/n)$. We define the Hamiltonian, with a slight abuse of notation\footnote{Each matrix $H_i$ operates on some set of qubits, and the summation is over their extension to the entire Hilbert space of the $n$ qubits.}, $H=\sum_{i=1}^m H_i$.
Output: Output $\YES$ if there exists a state $\ket{\psi}$ such that $\bra{\psi}H\ket{\psi} \leq a$, and $\NO$ if for every state $\ket{\psi}$, $\bra{\psi}H\ket{\psi} \geq b$. 
\end{definition}
\begin{definition}[Simulatable Hamiltonian\cite{AT03}]
\label{def:simulatable_hamiltonian}
We say a Hamiltonian $H$ on $n$ qubits is simulatable if for every $t>0$ and every accuracy $\alpha > 0$, the unitary transformation $U = \exp(-i H t)$ can be approximated to within $\alpha$ accuracy by a quantum circuit of size $poly(n,t,\frac{1}{\alpha})$.  
\end{definition}

\begin{definition}[\sklh\ problem]
\label{def:slh}
The input is the same as the input for the \klh\ problem together with a partition of the qubits to disjoint sets $A$ and $B$. The answer is $\YES$ if $\exists \ket{\psi} = \ket{\chi_{A}} \tensor \ket{\chi_{B}} \ s.t. \  \bra{\psi} H \ket{\psi} \leq a $ and the answer is $\NO$ if  $\bra{\psi} H \ket{\psi} \geq b$ for all tensor product states  $\ket{\psi} = \ket{\chi_{A}} \tensor \ket{\chi_{B}}$.
\end{definition}

\paragraph{Remark:} The above definition can be formulated using mixed states in the two following ways, with mixed product states and mixed separable states. It can be verified that indeed these definitions are equivalent.

\begin{definition}[\sklh\ problem - alternative definition 1]\label{def:AlternativeDefinition1}
The input is the same as in Def.~\ref{def:slh}. The answer is $\YES$ if there exists a product mixed state $\rho = \rho_{A} \tensor \rho_{B}$ s.t. $\tr{ H \rho} \leq a $ and the answer is $\NO$ if  $\tr{ H \rho}  \geq b$ for all product mixed states $\rho = \rho_{A} \tensor \rho_{B}$.
\end{definition}

\begin{definition}[\sklh\ problem - alternative definition 2]
The input is the same as in Def.~\ref{def:slh}. The answer is $\YES$ if there exists a separable mixed state $\rho = \sum_{i} p_{i} (\rho_{i}^{A}\tensor \rho_{i}^{B}) \ s.t. \  \tr{ H \rho} \leq a $ and the answer is $\NO$ if  $\tr{ H \rho}  \geq b$ for all separable mixed states $\rho = \sum_{i} p_{i} (\rho_{i}^{A}\tensor \rho_{i}^{B}) $.
\end{definition}

We now define the \ssh\ problem.

\begin{definition}[\ssh]
\label{def:ssh}
An operator $A$ over $n$ qubits is row-sparse if each row in $A$ has at most $poly(n)$ non-zero entries, and there exists an efficient classical algorithm that, given $i$, outputs a list $(j, A_{i,j})$ running over all non zero elements of $A_{i,j}$. The \ssh\ problem is the same as \sklh\ except each term in the input Hamiltonian is row-sparse instead of $k$-local.
\end{definition}

Finally, we define the \cldm\ problem which we will use to show that the \sklh\ problem is $\QMAC$.

\begin{definition}[\cldm\ \cite {liu_consistency_2006}]
\label{def:cldm}
We are given a collection of local density matrices $\rho_1,\ldots,\rho_m$, where each $\rho_i$ is a density matrix over qubits $C_i \subset \{1,\ldots,n\}$, and $|C_i| \leq k$ for some constant $k$. Each matrix entry is specified by $poly(n)$ bits of precision. In addition, we are given a real number $\beta \geq 1/poly(n)$ specified with $poly(n)$ bits of precision. The problem is to distinguish between the following two cases:
\begin{enumerate}
	\item There exists an $n$ qubits mixed state $\sigma$ such that for all $i$ such that $ \Tr_{\{1,\ldots,n\} \setminus C_i} (\sigma) = \rho_i$. In this case, output $\YES$.
	\item For all $n$ qubits mixed states $\sigma$, there exists some $i$ such that 
$|| \Tr_{\{1,\ldots,n\} \setminus C_i} (\sigma) - \rho_i ||_1 \geq \beta$. In this case output $\NO$.
\end{enumerate} 
\end{definition}

\section{\texorpdfstring{Proof that \ssh\ is $\QMATC$}{Proof that Separable Simulatable Hamiltonian is QMA(2)-complete}}\label{sec:sparse}
\subsection{\texorpdfstring{$\ssh \in \QMAT$}{Separable Simulatable Hamiltonian is in QMA(2)}}
The construction has the same structure as the proof that $\lh \in \QMA$ in \cite{KSV02}, and uses phase estimation as a subroutine to achieve that
\cite{KSV02,NC00}. We consider row sparse Hamiltonians $\{H_j\}_{1 \leq j \leq m}$ and $H = \sum_{j=1}^{m} H_j$. For each $j$, we construct a quantum algorithm $Q_j$ such that
\[ | \pr(Q_j \ accepts \ \ket{\psi}) - (1 - \bra{\psi}H_j\ket{\psi})| \le \eps,\]
where we choose $\eps = \frac{b-a}{3}$, and the running time of $Q_j$ is polynomial in $n$.
Let $Q$ be the algorithm where we pick $1 \leq j \leq m$ at random, and run $Q_j$. 
\[ | \pr(Q \ accepts \ \ket{\psi}) - (1 - \frac{1}{m}\bra{\psi}H\ket{\psi})| \le \eps. \]
Therefore, in a $\YES$ instance there exists a state $\ket{\psi} = \ket{\chi_A} \tensor \ket{\chi_B}$ which is accepted with probability at least $c=1-\frac{a}{m} - \eps$, whereas in a $\NO$ instance, the probability of acceptance for every tensor product state is at most $s=1-\frac{b}{m} + \eps$. Therefore the problem is in $\QMA_{c,s}(2)$, which is equal to $\QMAT$ by Thm.~\ref{thm:qma_k_equals_qma_2}.

All that is left to show how to implement $Q_j$. Aharonov and Ta-Shma have shown:
 
 \begin{lemma}[The sparse Hamiltonian lemma\cite{AT03}]
 \label{le:spare_hamiltonian}
 If $H$ is row-sparse, and $||H|| \leq poly(n)$ then $H$ is simulatable. 
 \end{lemma}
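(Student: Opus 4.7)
The plan is to reduce the simulation of $H$ to the simulation of many $1$-sparse Hermitian matrices, and then to combine the pieces with the Lie-Trotter-Suzuki product formula. First, I would decompose $H = \sum_{c=1}^L H_c$ where each $H_c$ has at most one nonzero entry per row and $L = \poly(n)$. Since each $\|H_c\| \le \|H\| \le \poly(n)$, a standard Trotter bound gives $\bigl\| e^{-iHt} - \bigl(\prod_{c=1}^L e^{-iH_c t/r}\bigr)^r \bigr\| = O\bigl(L^2 (\|H\| t)^2 / r\bigr)$, so choosing $r = \poly(n,t,1/\alpha)$ makes the overall error at most $\alpha$ while keeping the total number of $1$-sparse simulation steps polynomial. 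It therefore suffices (i) to produce such a decomposition and (ii) to simulate each factor $e^{-i H_c t/r}$ efficiently.

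For (i), I would view the nonzero pattern of $H$ as an undirected graph on $2^n$ vertices whose maximum degree $d$ is $\poly(n)$, and edge-color it with $L = O(d^2)$ colors; each color class is then a matching and hence defines a $1$-sparse Hermitian matrix after inheriting the corresponding entries of $H$. Because the graph is exponentially large, one cannot invoke Vizing's theorem globally; what is needed is a local coloring procedure that, given a vertex $x$, assigns a color to each incident edge using only $\poly(n)$ queries to the sparse-access oracle for $H$. I would implement this with a symmetric rule: for an edge $\{x,y\}$, compute $x$'s and $y$'s neighbor lists (each of size $\le d$), take the position $i$ at which $y$ appears in $x$'s list and the position $j$ at which $x$ appears in $y$'s list, and color the edge by the unordered pair $\{(i,j)\}$ with a lexicographic tie-breaker on the two endpoints. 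This yields $O(d^2)$ colors, is computed with $O(d)$ oracle queries, and by construction both endpoints output the same color.

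For (ii), each $H_c$ is a direct sum of $2\times 2$ Hermitians on the invariant subspaces $\mathrm{span}\{\ket{x},\ket{y}\}$ pinned down by the matching (together with $1$-dimensional blocks on isolated vertices). I would simulate $e^{-i H_c t/r}$ by the usual compute/apply/uncompute pattern: controlled on $\ket{x}$, compute $(y,(H_c)_{x,y})$ into an ancilla using the sparse-access oracle, apply the $2\times 2$ unitary $\exp(-i (H_c)|_{\{x,y\}} t/r)$ in $\mathrm{span}\{\ket{x},\ket{y}\}$ via $O(1)$ one- and two-qubit gates conditioned on the ancilla, and finally uncompute. Chaining $rL = \poly(n,t,1/\alpha)$ such steps produces the claimed simulation within accuracy $\alpha$. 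The main obstacle is the local edge-coloring step in (i): everything else reduces to routine Trotter error analysis and arithmetic-to-unitary constructions, but the coloring crucially requires that the two endpoints of every edge agree on its color without global coordination, which is what the symmetric position-pair rule above is designed to ensure.
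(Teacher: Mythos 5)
The paper does not actually prove this lemma; it imports it wholesale from Aharonov and Ta-Shma \cite{AT03}, so your proposal should be judged against the standard proof you are reconstructing. Your overall architecture is exactly right and matches \cite{AT03}: decompose $H$ into $\poly(n)$ many $1$-sparse Hermitian pieces via a locally computable edge coloring of the adjacency graph, combine with Trotter, and simulate each piece by the compute/apply/uncompute trick on its $2\times 2$ invariant blocks. The Trotter bound and the block simulation in part (ii) are routine and fine.

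The genuine gap is the coloring rule in part (i), which you correctly identify as the crux but then resolve incorrectly. Your rule guarantees that the two endpoints of an edge \emph{agree} on its color, but what a matching decomposition requires is that two edges \emph{sharing} a vertex get \emph{different} colors, and the symmetric position-pair rule fails this in the mixed-orientation case. Concretely, take vertices $a<b<c$ with neighbor lists $\Gamma(a)=(b,z)$, $\Gamma(b)=(c,a)$, $\Gamma(c)=(w,b)$. Edge $\{a,b\}$ gets the pair $(1,2)$ (position of $b$ in $\Gamma(a)$ is $1$, position of $a$ in $\Gamma(b)$ is $2$), and edge $\{b,c\}$ also gets $(1,2)$ (position of $c$ in $\Gamma(b)$ is $1$, position of $b$ in $\Gamma(c)$ is $2$); the unordered variant collides identically. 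Both edges meet at $b$, so that color class is $2$-sparse at row $b$, its invariant subspaces $\mathrm{span}\{\ket{a},\ket{b}\}$ and $\mathrm{span}\{\ket{b},\ket{c}\}$ overlap, and the direct-sum-of-$2\times 2$-blocks simulation in (ii) no longer applies. The trouble is that a vertex can be the lexicographically smaller endpoint of one incident edge and the larger endpoint of another, so it cannot tell which slot of the pair "belongs" to it. Fixing this is precisely the nontrivial content of the sparse Hamiltonian lemma: the position-pair classes are only guaranteed to have maximum degree $2$ (unions of paths and cycles), and \cite{AT03} (and the follow-up of Berry, Ahokas, Cleve and Sanders) must add further locally computable data to the color --- a deterministic-coin-tossing / Cole--Vishkin style refinement costing an extra $O(\log^* N)$-type overhead in the number of colors --- to break these degree-$2$ classes into genuine matchings without global coordination. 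Without that ingredient, or an equivalent one, the decomposition step of your proof does not go through.
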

 
 \begin{theorem}[Phase Estimation\cite{NC00}]
 \label{thm:phase_estimation}
Let $V$ be a unitary which can be implemented by a quantum circuit with $d$ gates, which has eigenstates $\{\ket{u_j}\}_{1\leq j \leq N}$, and eigenvalues $\{ e^{i \phi_j}\}_{1\leq j \leq N}$. Given a state $\ket{\phi} = \sum_{i=1}^N \sqrt{p_i} \ket{u_i}$, an error parameter $\epsilon$ and a precision parameter $\delta$, the phase estimation procedure outputs with probability at least $p_i(1 - \eps)$ a number which is $\delta$ close to $\phi_i$.
 
 Let $t = \log(\delta) + \ceil{\log(2+ \frac{1}{2 \epsilon})}$. The phase estimation procedure can be implemented by a quantum circuit with $O(t^2 + d^{2^t})$ gates. 
 \end{theorem}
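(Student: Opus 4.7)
The plan is to implement the textbook quantum phase estimation circuit on $t$ ancilla qubits, with an inverse Quantum Fourier Transform (QFT) at the end. First, apply Hadamard gates to each of the $t$ ancilla qubits to produce the uniform superposition $\frac{1}{\sqrt{2^t}}\sum_{k=0}^{2^t-1}\ket{k}$. Next, using the $j$-th ancilla qubit as control (for $j=0,\ldots,t-1$), apply the controlled-$V^{2^j}$ operation to the register holding the input $\ket{\phi}$. On any eigenstate $\ket{u_i}$ this performs phase kickback, turning the joint state into $\frac{1}{\sqrt{2^t}}\sum_{k=0}^{2^t-1} e^{ik\phi_i}\ket{k}\otimes\ket{u_i}$. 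Finally, apply the inverse QFT on the ancilla and measure in the computational basis, interpreting the $t$-bit string as an estimate of $\phi_i/(2\pi)$ and rescaling.

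For correctness on an eigenstate input $\ket{u_i}$, observe that the pre-measurement ancilla state is precisely the inverse QFT of a ``frequency $\phi_i$'' wave, which is a Dirichlet-kernel-type peak localized near the integer $k^\star$ closest to $\phi_i 2^t/(2\pi)$. A standard bound on the tail of this kernel shows that, with an appropriate choice of $t$ logarithmic in $1/\delta$ and $1/\eps$, the measured outcome falls within $\delta$ of $\phi_i$ with probability at least $1-\eps$. Extending to a superposition input $\ket{\phi}=\sum_i \sqrt{p_i}\ket{u_i}$ is then immediate by linearity: after phase kickback the eigenbranches carry disjoint information into the ancilla, so the probability of outputting a $\delta$-good estimate of $\phi_i$ is at least $p_i(1-\eps)$.

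For the gate count, the initial Hadamards cost $O(t)$; the inverse QFT on $t$ qubits costs $O(t^2)$ by the standard binary construction; and the controlled-$V^{2^j}$ for $j=0,\ldots,t-1$ can be implemented naively as $2^j$ successive applications of the $d$-gate controlled-$V$ circuit, giving $\sum_{j=0}^{t-1} d\cdot 2^j = O(d\cdot 2^t)$ gates in total. Summing yields $O(t^2 + d\cdot 2^t)$ gates, matching the stated complexity.

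The main obstacle is the QFT tail bound: when $\phi_i 2^t/(2\pi)$ is not an integer, the inverse QFT does not collapse to a single basis vector, so one has to carefully control the probability mass lying more than $\delta$ away from $k^\star$. This comes down to the slightly fiddly estimate $\left|\frac{1}{2^t}\sum_{k=0}^{2^t-1} e^{ik\theta}\right|$ for small $\theta$, handled using $|\sin(x)| \geq \frac{2}{\pi}|x|$ on $[-\pi/2,\pi/2]$ together with a geometric-series bound. Propagating this error through a superposition input and choosing $t$ so that the failure probability is at most $\eps$ gives the statement.
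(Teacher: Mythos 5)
Your proposal is correct and follows the standard construction from Nielsen and Chuang, which is exactly the source the paper cites for this theorem; the paper itself offers no proof, so there is nothing to diverge from. Note that your gate count $O(t^2 + d\cdot 2^t)$ and your choice of $t$ logarithmic in $1/\delta$ implicitly (and correctly) repair two apparent typos in the statement as printed, namely $d^{2^t}$ in place of $d\cdot 2^t$ and $\log(\delta)$ in place of $\log(1/\delta)$.
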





We can now show how to implement 
$Q_j$:
\begin{enumerate}
	\item Start with a state $\ket{\psi} = \sum_{i=1}^N \sqrt{p_i} \ket{u_i}$, where $\ket{u_i}$ is an eigenstate of $H_j$ with eigenvalue $\phi_i$. 
	\item Apply phase estimation with the unitary $U=\exp(iH_j)$ with probability for an error and precision $\frac{b-a}{6}$. Denote by $\tilde{\phi}$ the output of the phase estimation. 
\item Reject with probability $\tilde{\phi}$. 

\end{enumerate}
Using Thm.~\ref{thm:phase_estimation}, we can get both a lower and an upper bound on the acceptance probability:
\[(1- \frac{b - a}{6}) \sum_{i=1}^N p_i (\phi_i - \frac{b - a}{6}) \leq Pr(Q_j\ rejects \ \ket{\psi}) \leq \frac{b-a}{6} + (1- \frac{b - a}{6})\sum_{i=1}^N p_i (\phi_i + \frac{b - a}{6}). \]
Since $\sum_{i=1}^N p_i \phi_i = \bra{\psi} H_j \ket{\psi}$, and $I \succeq H_i \succeq 0$ we get:
\[  \bra{\psi} H_j \ket{\psi} -  \frac{b-a}{3}  \leq Pr(Q_j\ rejects \ \ket{\psi}) \leq  \bra{\psi} H_j \ket{\psi} + \frac{b-a}{3},\]
which was the requirement for $Q_j$.

Unfortunately, By Lemma~\ref{le:spare_hamiltonian}, we can only approximate $U=\exp(iH_j)$ which is needed in step 2. A polynomial approximation, which can be achieved in polynomial time, is good enough for our needs, for similar reasons as the analysis done in \cite{AT03} and \cite[Section 4.1]{WZ06}.
\onote{Hopefully, add explicit calculation in next version.}


\subsection{\texorpdfstring{\ssh\ is $\QMATH$}{Separable Simulatable Hamiltonian is QMA(2)-hard}}
Consider a promise problem $L = \{L_{yes},L_{no}\}$ which is in $\QMA_{s,c}(2)$ with $c = 1 - \frac{C}{512(T+1)^4}$ and $s = \frac{1}{T+1}$. $C$ is a universal constant that will be specified later. For such $s$ and $c$, we have $\QMA_{s,c}(2) = \QMA(2)$ by Theorem~\ref{thm:qma_k_equals_qma_2}\onote{old: This problem is hard for $\QMA(2)$ by theorem 2. Or: we don't need to assume that this is QMA-hard problem.}. Our goal is to reduce this problem to the \ssh\ problem.

Pick an instance $x$ of $L$ and let $A$ the associated verifying procedure. We will omit the dependence in $x$ and write the verifying procedure as a unitary $U$ taking as input the two quantum witnesses. We can assume w.l.o.g. that this verification procedure has the structure described in Thm.~\ref{thm:qma_k_equals_qma_2}. We decompose the verifying procedure into $T$ unitaries $U = U_1,\dots,U_T$ each acting on a 2 qubits. This means that after $t$ steps of the verifying procedure, the unitary applied is $U_t U_{t-1} \cdots U_0$\onote{the index was k. The index k sometimes refers to the locality, sometimes to the number of provers, so I don't want it to have a third meaning. I changed it to t.}, where we add the convention that $U_0 = I$.

We apply Kitaev's construction (See \cite[Sec. 14.4.1]{KSV02} for the detailed definition) of the circuit, and get a Hamiltonian of the form\footnote{Although the unary clock can be implemented, we use the construction where the clock is implemented using $O(\log(n))$ qubits for simplicity.}:
\[H = H_{in} + H_{prop} + H_{out}. \]

It should be stressed that the swap-test is implemented in a non-local manner, and therefore, $H_{prop}$ is not local. Nevertheless, each term in $H_{prop}$ is sparse. Reminder:
\begin{equation}
H_{prop} = \sum_{t=1}^T H_t ,
\label{eq:prop}
\end{equation}
\begin{equation}
 H_t = -  \frac{1}{2}  \ket{t}\bra{t-1}\tensor U_t  -   \frac{1}{2} \ket{t-1}\bra{t} \tensor U_t^\dagger  + \frac{1}{2}(\ketbra{t}{t} + \ketbra{t-1}{t-1}) \tensor I . 
\label{eq:prop_term}
\end{equation}

 Indeed, it can be verified that $H_t$ has at most $2$ non-zero entries in each row, in the case that $U_t = C-SWAP$, regardless of the size of the swapped registers.  To prove our reduction, we show the following:
\begin{itemize}
\item If $x \in L_{yes}$ then there exists $\ket{\psi} = \ket{\psi_1} \otimes \ket{\psi_2}$, such that $\bra{\psi} H \ket{\psi} \le \frac{C}{512(T+1)^5}$.
\item If $x \in L_{no}$ then for all $\ket{\psi} = \ket{\psi_1} \otimes \ket{\psi_2}$, $\bra{\psi} H \ket{\psi} \ge \frac{C}{256(T+1)^5}$.
\end{itemize}
\paragraph{Completeness:} In a $\YES$ instance, the two Merlins can send identical states which are accepted with probability at least $c$ (where $c$ is the completeness parameter), which have the form $\ket{\psi_1}=\ket{\psi_2}=\ket{\chi_1} \tensor \ldots \tensor \ket{\chi_r}$. Since $C-SWAP (\ket{+} \tensor \ket{\chi_i}\tensor \ket{\chi_i}) = \ket{+}  \tensor \ket{\chi_i}\tensor \ket{\chi_i}$, the first $r$ steps of the verification protocol (see Thm.~\ref{thm:qma_k_equals_qma_2})) have no effect. Therefore,
\begin{align*}
\ket{\eta} &=\frac{1}{\sqrt{T+1}}\sum_{t=0}^T \ket{t}^C \tensor U_t U_{t-1}\ldots U_0  \left(\ket{0^m}^A \tensor \ket{\psi_1}^{P_1} \tensor \ket{\psi_2}^{P_2}\right)
\end{align*}
where $C$ is the clock subsystem, $A$ is the ancilla subsystem, and $P_{1}$ and $P_{2}$ are the first and second proof subsystems. 

Using Theorem~\ref{thm:qma_k_equals_qma_2}, we know that in a $\YES$ instance, the two Merlins can send identical states which are accepted with probability at least $c$. These states are of the form $\ket{\psi_1}=\ket{\psi_2}=\ket{\chi_1} \tensor \ldots \tensor \ket{\chi_r}$. The verifier then does the following: he first performs a swap test on each pair $\ket{\chi_i}\tensor \ket{\chi_i}$ (characterized by the first $r$ unitaries $U_1,dots,U_r$). This does not change the state at all. He then applies the verifying procedure only on the first proof, and ancilla. Therefore,
\begin{align*}
\ket{\eta} &=\frac{1}{\sqrt{T+1}}\sum_{t=0}^T \ket{t}^C \tensor U_t U_{t-1}\ldots U_0  \left(\ket{0^m}^A \tensor \ket{\psi_1}^{P_1} \tensor \ket{\psi_2}^{P_2}\right) \\
&= \left(\frac{1}{\sqrt{T+1}}\sum_{t=r+1}^T \ket{t}^C \tensor U_t U_{t-1}\ldots U_{r+1}  \left(\ket{0^m}^A \tensor \ket{\psi_1}^{P_1}\right) \right) \tensor \ket{\psi_2}^{P_2}.
\end{align*}
This shows that $\ket{\eta}$ is a tensor product state with respect to the spaces ($C \otimes A \otimes P_1$) on one end and $P_2$ on the other. Kitaev's proof (see \cite[Sec. 14.4.3]{KSV02}) shows that 
\begin{equation}
\bra{\eta}H\ket{\eta} \leq \frac{1-c}{T+1},
\end{equation}
and by substituting $c$, we get,
\begin{equation}
\label{eq:completeness}
\bra{\eta}H\ket{\eta} \leq \frac{C}{512(T+1)^5}.
\end{equation}

\paragraph{Soundness:} 
We first outline the three steps of the proof qualitatively. We assume that there exists a low-energy state $\ket{\omega} = \ket{\omega_1} \tensor \ket{\omega_2}$, and we show that:

(i) If $\ket{\omega}$ has low energy, then $\ket{\omega}$ is close to a history state $\ket{\eta}$, i.e. a state of the form $\frac{1}{\sqrt{T+1}} \sum_{t=0}^T \ket{t} \tensor U_t U_{t-1}\ldots U_0  \ket{0^m} \tensor \ket{\psi}$ for some $\ket{\psi}$. We usually write such a history state $\ket{\eta_{\psi}}$ to mark the dependence in $\ket{\psi}$.

(ii) If a history state $\ket{\eta_{\psi}}$ is close to a product state (and by (i), it is), then, the associated state $\ket{\psi}$ is close to a tensor product state.

(iii) If $\ket{\psi}$ is close to a tensor product state, the originating history state $\ket{\eta_\psi}$ must have high energy which will contradict (i).

\begin{lemma}[Step one]
\label{le:step_one} If $\bra{\omega} H \ket{\omega} \leq \alpha \frac{C}{(T+1)^3}$, for some universal constant $C$, then, there exists a history state $\ket{\eta}$ s.t. $|\braket{\omega}{\eta}|^2 \geq 1 - \alpha$.
\end{lemma}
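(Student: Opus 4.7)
My approach is to reduce the statement to Kitaev's spectral gap for $H_{prop}$ alone, and to verify that the standard change-of-basis analysis is insensitive to whether the gates $U_t$ are local or merely sparse (as happens here for the controlled-swap steps of the product test).

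Since $H_{in}$, $H_{prop}$, and $H_{out}$ are each positive semidefinite, the hypothesis immediately yields $\bra{\omega}H_{prop}\ket{\omega} \le \bra{\omega}H\ket{\omega} \le \alpha C/(T+1)^3$. I would then apply Kitaev's change of basis $W := \sum_{t=0}^T \ketbra{t}{t} \otimes (U_t U_{t-1}\cdots U_0)$. A direct computation, using only that each $U_t$ is unitary and \emph{not} anything about its support, gives $W^\dagger H_{prop} W = E \otimes I$, where $E$ is the $(T+1)\times(T+1)$ tridiagonal matrix with main diagonal $(\tfrac{1}{2},1,\ldots,1,\tfrac{1}{2})$ and off-diagonal entries $-\tfrac{1}{2}$. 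The matrix $2E$ is the graph Laplacian of the path on $T+1$ vertices, whose spectrum is $\{\,2-2\cos(k\pi/(T+1)) : k=0,1,\ldots,T\,\}$; hence the unique zero eigenvector of $E$ is the uniform clock superposition $\frac{1}{\sqrt{T+1}}\sum_t \ket{t}$, and by Jordan's inequality the smallest nonzero eigenvalue of $E$ is at least $c'/(T+1)^2$ for a universal $c'>0$ (one may take $c'=2$).

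Undoing $W$, the nullspace of $H_{prop}$ is exactly the history subspace $\Hc_{\mathrm{hist}}$ (i.e.\ the span of the states $\ket{\eta_\psi}$ for arbitrary initial $\ket{\psi}$). Letting $\Pi$ denote the projector onto $\Hc_{\mathrm{hist}}$, the gap bound gives
\begin{equation*}
\|(I-\Pi)\ket{\omega}\|^2 \;\le\; \frac{\bra{\omega}H_{prop}\ket{\omega}}{c'/(T+1)^2} \;\le\; \frac{\alpha C}{c'(T+1)} \;\le\; \frac{\alpha C}{c'}.
\end{equation*}
Fixing the promised universal constant as $C := c'$ (or any smaller positive value; we are free to choose it here) forces $\|(I-\Pi)\ket{\omega}\|^2 \le \alpha$. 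Setting $\ket{\eta} := \Pi\ket{\omega}/\|\Pi\ket{\omega}\|$—or any history state when $\Pi\ket{\omega}=0$, in which case $\alpha\ge 1$ and the claim is vacuous—produces a history state with $|\braket{\omega}{\eta}|^2 = \|\Pi\ket{\omega}\|^2 \ge 1-\alpha$.

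The only conceptual point to check is that the $W$-conjugation continues to work when $U_t$ is a controlled-swap acting on many qubits, but this is immediate: the key identity $W^\dagger(\ketbra{t}{t-1}\otimes U_t)W = \ketbra{t}{t-1}\otimes I$ uses nothing beyond $U_t^\dagger U_t = I$. Everything else is routine, and I do not anticipate a substantive obstacle beyond bookkeeping the constant $C$ against $c'$.
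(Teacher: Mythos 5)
Your reduction to the gap of $H_{prop}$ alone does not prove the lemma as stated: the kernel of $H_{prop}$ is \emph{strictly larger} than the span of history states. Conjugating by $W$ shows that $\ker H_{prop}$ consists of all states of the form $\frac{1}{\sqrt{T+1}}\sum_t \ket{t}\otimes U_t\cdots U_0\ket{\xi}$ with $\ket{\xi}$ an \emph{arbitrary} state of the ancilla and proof registers, whereas a history state in this paper (and in the downstream Lemmas ``Step two'' and ``Step three'', which rely on this structure) must have the ancillas initialized to $\ket{0^m}$, i.e.\ $\ket{\eta_\psi}=\frac{1}{\sqrt{T+1}}\sum_t\ket{t}\otimes U_t\cdots U_0(\ket{0^m}\otimes\ket{\psi})$. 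Consequently your $\ket{\eta}:=\Pi\ket{\omega}/\|\Pi\ket{\omega}\|$, with $\Pi$ the projector onto $\ker H_{prop}$, need not be a history state at all, and nothing in your argument invokes the $H_{in}$ part of the hypothesis to force the ancilla component toward $\ket{0^m}$.

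The paper's proof instead takes $\mathcal{V}_{hist}=\ker(H_{in}+H_{prop})$, which \emph{is} exactly the history subspace, and uses the spectral gap $\Delta(H_{in}+H_{prop})\ge C/(T+1)^3$ (Claim~\ref{cl:gap}, proved in the appendix via Kitaev's geometric lemma: the individual gaps of $H_{in}$ and $H_{prop}$, both $\Omega(1/(T+1)^2)$, are combined with the bound $\cos^{2}\theta\le 1-\frac{1}{T+1}$ on the angle between their null spaces). The factor of $(T+1)$ lost in that combination is precisely why the hypothesis carries $(T+1)^3$ rather than the $(T+1)^2$ you obtain for $H_{prop}$ alone; the slack you end up with is a symptom of having discarded the $H_{in}$ constraint, not a bonus. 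The rest of your argument --- the identity $W^\dagger H_{prop}W=E\otimes I$ being insensitive to the non-locality of the controlled-swaps, the path-Laplacian spectrum, and the final projection step --- is correct and matches the paper; the missing ingredient is the geometric-lemma step that folds $H_{in}$ into the gap so that the projection lands in the genuine history subspace.
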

\begin{proof}
Let $\mathcal{V}_{hist}$ the subspace spanned by all history states. We can verify that $\mathcal{V}_{hist}$ is the kernel of $H_{init} + H_{prop}$ \anote{ref ?}. We use the following Claim, which is proved in Appendix~\ref{app:spectral_gap}.
\begin{claim}
\label{cl:gap}
 $\Delta(H_{init} + H_{prop}) \geq \frac{C}{(T+1)^3}$, for some universal constant $C$, where $\Delta(A)$ is the smallest non-zero eigenvalue of $A$. 
\end{claim}
We can write $\ket{\omega} = \sqrt{1-p} \ket{\eta} + \sqrt{p} \ket{\eta^\perp}$, for $\ket{\eta} \in \mathcal{V}_{hist}$, and $\ket{\eta^\perp} \in \mathcal{V}_{hist}^\perp$ . By assumption
\begin{align*}
 \alpha \frac{C}{(T+1)^3} \geq \bra{\omega} H \ket{\omega}\geq  \bra{\omega} H_{init}+H_{prop} \ket{\omega} = p \bra{\eta^\perp} H_{init}+H_{prop} \ket{\eta^\perp} \geq p\frac{C}{(T+1)^3}, 
\end{align*}
where the first inequality follows from the assumption of the lemma, the second uses the fact that $H_{out} \succeq 0$ and the last inequality uses Claim~\ref{cl:gap}. 
To conclude, $p \leq \alpha$ which implies that $|\braket{\omega}{\eta}|^2 = 1 - p \geq 1 - \alpha$, as needed.
\end{proof}

\begin{lemma}[Step two]
\label{le:step_two}
Let $\ket{\eta_{\psi}}$ a history state and $\ket{\psi}$ such that 
\begin{align*}
\ket{\eta_{\psi}} & =  \frac{1}{\sqrt{T+1}} \sum_{t=0}^T \ket{t}^C \tensor U_t U_{t-1}\ldots U_0  (\ket{0^m}^A \tensor \ket{\psi}^{P_1,P_2}) \\
& = \frac{1}{\sqrt{T+1}} \ket{0}^{C}\ket{0^m}^{A}\ket{\psi}^{P_1,P_2} +
\frac{1}{\sqrt{T+1}} \sum_{t=1}^T \ket{t}^C \tensor U_t U_{t-1}\ldots U_1  (\ket{0^m}^A\ket{\psi}^{P_1,P_2})  
\end{align*}
where we consider the following subsystems:  $C$ is the clock subsystem, $A$ is the ancilla subsystem, and $P_{1}$ and $P_{2}$ are the first and second proof subsystems. 
If there exist two states $\ket{\psi_1} \in \spa{C} \otimes \spa{A} \otimes \spa{P}_1$ and $\ket{\psi_2} \in \spa{P}_2$ such that 
$
|\bra{\eta_\psi} (\ket{\psi_1} \tensor \ket{\psi_2})|^2 \geq 1 - \eps
$
then there exists a state $\ket{L} \in \spa{P}_1$ such that $|\bra{\psi} (\ket{L}\tensor \ket{\psi_2})|^2 \geq 1 - \eps(T+1)$
\end{lemma}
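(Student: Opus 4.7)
The plan is to project both $\ket{\eta_\psi}$ and the candidate product state $\ket{\phi} := \ket{\psi_1}\tensor\ket{\psi_2}$ onto the subspace in which the clock equals $\ket{0}$ and the ancilla equals $\ket{0^m}$; i.e., apply the projector $\Pi := \ketbra{0}{0}^C \tensor \ketbra{0^m}{0^m}^A \tensor I^{P_1 P_2}$. Since different clock values $\ket{t}$ are orthogonal, only the $t=0$ term of $\ket{\eta_\psi}$ survives, and because $U_0 = I$ this term is exactly $\tfrac{1}{\sqrt{T+1}}\ket{0}^C\ket{0^m}^A\ket{\psi}^{P_1 P_2}$. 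So $\Pi\ket{\eta_\psi} = \tfrac{1}{\sqrt{T+1}}\ket{0}\ket{0^m}\ket{\psi}$. For the product state, $\Pi\ket{\phi}$ factors as $\alpha\,\ket{0}\ket{0^m}\ket{L}^{P_1}\tensor\ket{\psi_2}^{P_2}$, where $\ket{L}$ is the normalized projection onto $\spa{P}_1$ of $(\bra{0}^C\bra{0^m}^A \tensor I^{P_1})\ket{\psi_1}$ and $\alpha\in\mathbb{C}$ is its amplitude. This $\ket{L}$ is the state promised by the lemma.

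Next, the plan is to transfer the fidelity bound through $\Pi$ by using a decomposition rather than absorbing phases. Write $\ket{\eta_\psi} = \braket{\phi}{\eta_\psi}\ket{\phi} + \ket{\xi}$ with $\braket{\phi}{\xi}=0$, so that $\|\xi\|^2 = 1 - |\braket{\eta_\psi}{\phi}|^2 \leq \eps$. Applying $\Pi$ and using that projections are contractions,
\begin{equation*}
\bigl\|\Pi\ket{\eta_\psi} - \braket{\phi}{\eta_\psi}\,\Pi\ket{\phi}\bigr\| = \|\Pi\ket{\xi}\| \leq \|\xi\| \leq \sqrt{\eps}.
\end{equation*}
Substituting the explicit forms of $\Pi\ket{\eta_\psi}$ and $\Pi\ket{\phi}$ and multiplying through by $\sqrt{T+1}$, this becomes
\begin{equation*}
\bigl\|\ket{\psi} - \beta\ket{L}\tensor\ket{\psi_2}\bigr\| \leq \sqrt{\eps(T+1)},
\end{equation*}
where $\beta := \sqrt{T+1}\,\alpha\,\braket{\phi}{\eta_\psi}$ is a scalar.

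Finally, I invoke the best-approximation property of orthogonal projection in the one-dimensional subspace spanned by $\ket{L,\psi_2}$: the closest element of this line to the unit vector $\ket{\psi}$ is $\braket{L,\psi_2}{\psi}\,\ket{L,\psi_2}$, and the squared distance is exactly $1 - |\braket{\psi}{L,\psi_2}|^2$. Hence
\begin{equation*}
1 - |\braket{\psi}{L,\psi_2}|^2 \;\leq\; \bigl\|\ket{\psi} - \beta\ket{L,\psi_2}\bigr\|^2 \;\leq\; \eps(T+1),
\end{equation*}
which is precisely the claimed bound.

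The main subtlety is handling the unknown scalar $\beta$: directly applying $\Pi$ to $\ket{\eta_\psi}-\ket{\phi}$ (after absorbing a global phase) would lose a factor of two and only yield $1-|\braket{\psi}{L,\psi_2}|^2 \leq 2\eps(T+1)$. The trick that saves the factor is to subtract the best scalar approximation $\braket{\phi}{\eta_\psi}\ket{\phi}$ of $\ket{\eta_\psi}$ along $\ket{\phi}$ before projecting, so that the residual has squared norm exactly $1-|\braket{\eta_\psi}{\phi}|^2 \leq \eps$; combined with the analogous optimality for the final best-approximation step, the constant comes out sharp.
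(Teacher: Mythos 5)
Your proof is correct and achieves the paper's exact bound $1-\eps(T+1)$, but it is organized differently from the paper's argument. The paper expands $\ket{\psi_1}$ in the clock--ancilla basis, writes the overlap as $|\bra{\eta_\psi}(\ket{\psi_1}\tensor\ket{\psi_2})| \le \sqrt{\alpha/(T+1)}\,|\bra{\psi}(\ket{L}\tensor\ket{\psi_2})| + \sqrt{T(1-\alpha)/(T+1)}$, applies Cauchy--Schwarz a second time to eliminate $\alpha$, and then inverts the resulting inequality. You instead push everything through the projector $\Pi$ onto the clock-$0$, ancilla-$0^m$ sector: you subtract the best scalar multiple $\braket{\phi}{\eta_\psi}\ket{\phi}$ from $\ket{\eta_\psi}$, use that $\Pi$ is a contraction to get $\|\ket{\psi}-\beta\ket{L}\tensor\ket{\psi_2}\|\le\sqrt{\eps(T+1)}$, and finish with the best-approximation property of the line spanned by $\ket{L}\tensor\ket{\psi_2}$. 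Both arguments rest on the same two facts---the $t=0$ component of $\ket{\eta_\psi}$ has amplitude exactly $1/\sqrt{T+1}$ and equals $\ket{0}\ket{0^m}\ket{\psi}$, while the corresponding component of the product state factors as $\alpha\,\ket{0}\ket{0^m}\ket{L}\tensor\ket{\psi_2}$---and your two optimality steps (optimal scalar before projecting, optimal scalar after) play precisely the role of the paper's two Cauchy--Schwarz applications, which is why the constant comes out identical rather than losing the factor of $2$ you mention. One pedantic point, which applies equally to the paper's proof: if $(\bra{0}^C\bra{0^m}^A\tensor I^{P_1})\ket{\psi_1}=0$ then $\ket{L}$ is not defined by normalization, but then $\|\Pi\ket{\eta_\psi}\|=1/\sqrt{T+1}\le\sqrt{\eps}$ forces $\eps(T+1)\ge 1$, so the conclusion holds vacuously for any choice of $\ket{L}$.
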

\begin{proof}
We write
\begin{align*}
\ket{\psi_1} = \sqrt{\alpha} \ket{0}^C \ket{0^m}^A \ket{L}^{P_1} + 
\sqrt{1 - \alpha}\sum_{i,j : (i,j) \neq (0,0^m)} \beta_{i,j}\ket{i}^C \ket{j}^A \ket{\psi_{i,j}}^{P_1}
\end{align*}
with $\sum_{i,j : (i,j) \neq (0,0^m)} |\beta{i,j}|^2 = 1$. From this, we immediately have
\begin{align*}
| \bra{\eta_\psi} (\ket{\psi_1} \otimes \ket{\psi_2}) | & \le 
\sqrt{\frac{\alpha}{T+1}}\cdot |\bra{\psi} (\ket{L}\tensor \ket{\psi_2})| + \sqrt{\frac{T(1 - \alpha)}{T+1}} \cdot 1 \\
& = \sqrt{\alpha} \sqrt{\frac{1}{T+1}}\cdot |\bra{\psi} (\ket{L}\tensor \ket{\psi_2})| + \sqrt{1 - \alpha} \sqrt{\frac{T}{T+1}} \\
 & \le \sqrt{\frac{|\bra{\psi} (\ket{L}\tensor \ket{\psi_2})|^2}{T+1} + \frac{T}{T+1}},
\end{align*}
where used Cauchy Schwarz in both inequalities. Therefore,
\begin{align*}
|\bra{\psi} (\ket{L}\tensor \ket{\psi_2})|^2 & \ge
(T+1) \cdot \left(| \bra{\eta_\psi} (\ket{\psi_1} \otimes \ket{\psi_2}) |^2 - \frac{T}{T+1}\right) .
\end{align*}
By using $|\bra{\eta_\psi}  (\ket{\psi_1} \tensor \ket{\psi_2})|^2 \geq 1 -  \eps$, we can further bound
\begin{align*}
|\bra{\psi} (\ket{L}\tensor \ket{\psi_2})|^2 & \ge  1 - \eps(T+1).
\end{align*}
\end{proof}

\begin{lemma}[Step three]
\label{le:step_three}
Consider a history state $\ket{\eta_{\psi}}$ with an associated state $\ket{\psi}$. In a $\NO$ instance with soundness parameter $s$, if $|\bra{\psi} (\ket{\psi_1} \tensor \ket{\psi_2})|^2 \geq 1 - \eps$, then $\bra{\eta_\psi}H\ket{\eta_\psi} \geq \frac{1}{T+1}(1-s - 2 \sqrt{\eps})$.
\end{lemma}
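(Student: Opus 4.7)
The plan is to exploit the standard property of Kitaev's construction that history states lie in the kernel of $H_{in}+H_{prop}$, so the entire energy comes from $H_{out}$, and then to convert the soundness guarantee (which applies to product states) into a guarantee about $\ket{\psi}$, using the fact that $\ket{\psi}$ is close to a product state.

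First I would write $\bra{\eta_\psi}H\ket{\eta_\psi} = \bra{\eta_\psi}(H_{in}+H_{prop})\ket{\eta_\psi} + \bra{\eta_\psi}H_{out}\ket{\eta_\psi} = \bra{\eta_\psi}H_{out}\ket{\eta_\psi}$, since any history state is annihilated by $H_{in}+H_{prop}$ by construction. Next, recalling that $H_{out} = \ket{T}\bra{T}^{C} \otimes \Pi_{\text{rej}}$, where $\Pi_{\text{rej}}$ is the projector onto the rejecting subspace of the output qubit, a direct calculation using the form of $\ket{\eta_\psi}$ yields
\begin{equation*}
\bra{\eta_\psi}H_{out}\ket{\eta_\psi} = \frac{1}{T+1}\, \Pr[\text{reject}\mid \ket{\psi}],
\end{equation*}
where $\Pr[\text{reject}\mid \ket{\psi}]$ is the probability that the verifier, run on witness $\ket{\psi}$ with ancilla initialized to $\ket{0^m}$, outputs reject.

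Next I would relate $\Pr[\text{reject}\mid \ket{\psi}]$ to $\Pr[\text{reject}\mid \ket{\psi_1}\otimes\ket{\psi_2}]$. Writing the rejection probability as $\tr{M\,\ket{\psi}\bra{\psi}}$ for the POVM element $M=(\bra{0^m}\otimes I)U^\dagger \Pi_{\text{rej}} U(\ket{0^m}\otimes I)$ with $0\preceq M \preceq I$, I would invoke $|\tr{M(\rho-\sigma)}|\le \|M\|_\infty \|\rho-\sigma\|_1$ together with the pure-state identity $\|\ketbra{\psi}{\psi}-\ketbra{\phi}{\phi}\|_1 = 2\sqrt{1-|\braket{\psi}{\phi}|^2}$. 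The hypothesis $|\bra{\psi}(\ket{\psi_1}\otimes\ket{\psi_2})|^2\geq 1-\eps$ then gives
\begin{equation*}
\bigl|\Pr[\text{reject}\mid \ket{\psi}] - \Pr[\text{reject}\mid \ket{\psi_1}\otimes\ket{\psi_2}]\bigr| \le 2\sqrt{\eps}.
\end{equation*}

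Finally, since $x$ is a \NO\ instance with soundness $s$ and $\ket{\psi_1}\otimes\ket{\psi_2}$ is a product witness, the soundness condition gives $\Pr[\text{reject}\mid \ket{\psi_1}\otimes\ket{\psi_2}] \ge 1-s$. Combining the three steps,
\begin{equation*}
\bra{\eta_\psi}H\ket{\eta_\psi} \,=\, \frac{1}{T+1}\Pr[\text{reject}\mid \ket{\psi}] \,\ge\, \frac{1}{T+1}\bigl(1-s-2\sqrt{\eps}\bigr),
\end{equation*}
which is the claim. The only subtle point is making sure the constants in the overlap-to-trace-distance conversion give exactly the $2\sqrt{\eps}$ factor stated; the rest is the standard Kitaev bookkeeping.
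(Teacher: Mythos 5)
Your proposal is correct and follows essentially the same route as the paper: both reduce $\bra{\eta_\psi}H\ket{\eta_\psi}$ to $\frac{1}{T+1}$ times the rejection probability on witness $\ket{\psi}$ (the paper writes this directly as $\frac{1}{T+1}\bra{0^m}\otimes\bra{\psi}U^\dagger\Pi_{reject}U\ket{0^m}\otimes\ket{\psi}$, implicitly using that history states lie in $\ker(H_{in}+H_{prop})$), then pass to the nearby product witness via an overlap-to-trace-distance bound (the paper's Claim~\ref{cl:close_rejecting_probability}, proved with Fuchs--van de Graaf), and finally invoke soundness. Your Hölder-type estimate gives $2\sqrt{\eps}$ where the paper's claim actually yields the tighter $\sqrt{\eps}$, but the stated lemma only requires $2\sqrt{\eps}$, so this is immaterial.
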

To prove this Lemma, we will need the following Claim.
\begin{claim}
\label{cl:close_rejecting_probability}
Let $\Pi$ be a projector, and $\ket{v_1}, \ket{v_2}$ be arbitrary, and let $q_i = \bra{v_i} \Pi \ket{v_i}$. If $|\braket{v_1}{v_2}|^2 \geq 1 - \delta$, then, $|q_1 - q_2|  \leq \sqrt{\delta}$.
\end{claim}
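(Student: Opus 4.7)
The plan is to reduce the claim to the standard relationship between fidelity and trace distance of pure states. The starting observation is that $q_i = \Tr(\Pi \ket{v_i}\bra{v_i})$ is precisely the probability of observing outcome $\Pi$ when measuring the pure-state density matrix $\ket{v_i}\bra{v_i}$ with the two-outcome POVM $\{\Pi, I-\Pi\}$. Hence the quantity $|q_1 - q_2|$ is upper bounded by the trace distance $D(\ket{v_1}\bra{v_1}, \ket{v_2}\bra{v_2})$, a standard fact about measurement statistics on quantum states.

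Next I would invoke the closed form for the trace distance between two pure states, namely
\[
D(\ket{v_1}\bra{v_1}, \ket{v_2}\bra{v_2}) \;=\; \sqrt{1 - |\braket{v_1}{v_2}|^2},
\]
which follows from diagonalizing the rank-$2$ operator $\ket{v_1}\bra{v_1} - \ket{v_2}\bra{v_2}$. Combining the two steps and substituting the hypothesis $|\braket{v_1}{v_2}|^2 \geq 1 - \delta$ yields $|q_1 - q_2| \leq \sqrt{\delta}$, which is exactly the claim.

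An equally short alternative is a direct calculation: decompose $\ket{v_2} = \alpha \ket{v_1} + \beta \ket{w}$ with $\braket{v_1}{w}=0$, $|\alpha|^2 + |\beta|^2 = 1$ and $|\alpha|^2 = |\braket{v_1}{v_2}|^2 \geq 1-\delta$ (so $|\beta|^2 \leq \delta$), expand $q_2 = \bra{v_2}\Pi\ket{v_2}$, and bound the resulting cross terms using Cauchy--Schwarz together with $0 \preceq \Pi \preceq I$, which gives the diagonal terms in $[0,1]$. Either route is clean, and no genuine obstacle arises; the statement is essentially a textbook quantum-information fact used as a small technical tool inside Lemma~\ref{le:step_three}. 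The only subtlety worth noting is that one must obtain the sharp $\sqrt{\delta}$ factor rather than a weaker polynomial dependence in $\delta$, and the fidelity--trace-distance identity delivers this tight bound immediately.
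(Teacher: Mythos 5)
Your proposal is correct and follows essentially the same route as the paper: bound $|q_1-q_2|$ by the trace distance $D(\ketbra{v_1}{v_1},\ketbra{v_2}{v_2})$ and then bound that by $\sqrt{1-|\braket{v_1}{v_2}|^2}$ (the paper invokes the Fuchs--van de Graaf inequality, which for pure states is exactly the closed-form identity you cite). The alternative direct expansion you sketch is fine but unnecessary; the main argument matches the paper's proof.
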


\begin{proof}
To prove this claim, we use the trace distance between $\ketbra{v_1}{v_1}$ and $\ketbra{v_2}{v_2}$. The trace distance is denoted $D(\ketbra{v_1}{v_1},\ketbra{v_2}{v_2})$ and is equal to $\frac{1}{2}\|\,\ketbra{v_1}{v_1} - \ketbra{v_2}{v_2} \,\|_1$.\onote{Andre: I changed your notation to D instead of $\Delta$ because I'm using $\Delta$ already in the appendix for the spectral gap of an operator.} We know that from the characterization of the trace distance that can be found in~\cite{NC00}, we have 
\begin{align*}
D(\ketbra{v_1}{v_1},\ketbra{v_2}{v_2}) \ge |q_1 - q_2|. \end{align*}
Moreover, we know by a Fuchs- van de Graaf inequality that $D(\ketbra{v_1}{v_1},\ketbra{v_2}{v_2}) \le \sqrt{1 - |\braket{v_1}{v_2}|^2}$ (\cite{FG99}). By putting everything together, we have
\begin{align*}
|q_1 - q_2| \le D(\ketbra{v_1}{v_1},\ketbra{v_2}{v_2})  \le  \sqrt{1 - |\braket{v_1}{v_2}|^2} \le \sqrt{\delta}. \end{align*}
\end{proof}

We can now prove the Lemma.
\begin{proof}
For every state $\ket{\psi}$,
\[ \bra{\eta_\psi} H \ket{\eta_\psi} = \frac{1}{T+1} \bra{0^m} \tensor \bra{\psi} U_1^\dagger \ldots U_t^\dagger \Pi_{reject}U_t \ldots U_1 \ket{0^m} \tensor \ket{\psi}\]

By using Claim \ref{cl:close_rejecting_probability}, we get 
\begin{align*}
\bra{\eta_\psi} H \ket{\eta_\psi} &\geq \frac{1}{T+1} ( \bra{0^m} \tensor \bra{\psi_1} \tensor \bra{\psi_2} U_1^\dagger \ldots U_t^\dagger \Pi_{reject}U_t \ldots U_1 \ket{0^m} \tensor \ket{\psi_1} \tensor \ket{\psi_2} - 2\sqrt{\eps}) \\
 &= \frac{1}{T+1} (\pr(\mathcal{A}\ rejects \ket{\psi_1}\tensor\ket{\psi_2}) - 2\sqrt{\eps}\\
 &\geq \frac{1}{T+1} (1 - s - 2 \sqrt{\eps}),
\end{align*}

where in the last inequality, we used the fact that this is a $\NO$ instance, and therefore all tensor product states are rejected with probability at least $1-s$.
\end{proof}
We can now combine the three steps, and prove the soundness property. Assume, by contradiction, that there exists a state $\ket{\omega}= \ket{\omega_{1}} \tensor \ket{\omega_{2}}$, with energy below the promise, i.e. $\bra{\omega} H \ket{\omega} \leq \frac{C}{256(T+1)^{5}}$.

By Lemma \ref{le:step_one}, there exists a state $\ket{\eta_{\psi}}$ such that 
\begin{equation}
\label{eq:overlap_omega_psi}
|\braket{\eta_{\psi}}{\omega}|^{2} \geq 1 -\frac{1}{256(T+1)^{2}}.
\end{equation}
Using Lemma \ref{le:step_two}, there exists $\ket{\phi} = \ket{\phi_{1}} \tensor \ket{\phi_{1}}$ such that $|\braket{\psi}{\phi}|^{2} \geq 1 -\frac{1}{256(T+1)}$. By Lemma \ref{le:step_three},
\[ \bra{\eta_{\psi}} H \ket{\eta_{\psi}} \geq \frac{1}{T+1}\left(1 - \frac{1}{T+1} - 2 \sqrt{\frac{1}{256(T+1)}}\right). \]

Our goal is to lower bound $\bra{\omega} H \ket{\omega}$. We have, $\ket{\omega} = \sqrt{1-p} \ket{\eta_{\psi}} + \sqrt{p} \ket{\eta_{\psi}^{\perp}}$, for some $\ket{\eta_{\psi}^{\perp}}$ that satisfies $\braket{\eta_{\psi}}{\eta_{\psi}^{\perp}} = 0$, where $ 0 \leq p \leq \frac{1}{256(T+1)^{2}}$ by Eq.~\eqref{eq:overlap_omega_psi}, therefore
\begin{equation}
\bra{\omega}H \ket{\omega}= (1-p) \bra{\eta_{\omega}}H\ket{\eta_{\omega}} + p \bra{\eta_{\omega}^{\perp}}H\ket{\eta_{\omega}^{\perp}} + 2\sqrt{(1-p)p}\, Re(\bra{\eta_{\omega}^{\perp}}H\ket{\eta_{\omega}})
\end{equation}
We define $\delta \equiv\frac{1}{T+1}\left(1 - \frac{1}{T+1} - 2 \sqrt{\frac{1}{256(T+1)}}\right)$. Since, $H \succeq 0$, clearly $\bra{\eta_{\omega}^{\perp}}H\ket{\eta_{\omega}^{\perp}} \geq 0$. Also, $H=H_{in}+H_{prop}+H_{out}$, and $(H_{in}+H_{prop})\ket{\eta_{\psi}} = 0$, and $I \succeq H_{out} \succeq 0$ which implies $Re(\bra{\eta_{\omega}^{\perp}}H\ket{\eta_{\omega}}) = Re(\bra{\eta_{\omega^{\perp}}}H_{out}\ket{\eta_{\omega}}) \geq -1 $. Together, this gives

\[ \bra{\omega}H \ket{\omega} \geq (1-p)\delta - 2 \sqrt{(1-p)p}. \]
We can lower bound the first term by using $\delta \geq \frac{1}{2(T+1)}$ and $p \leq \frac{1}{2}$, and the second term by using $\sqrt{(1-p)p} \leq \sqrt{p}$, hence
\[ \bra{\omega}H \ket{\omega} \geq \frac{1}{4(T+1)} - 2 \sqrt{\frac{1}{256(T+1)^{2}}}  =\frac{1}{8(T+1)}. \]
This contradicts our assumption that $\bra{\omega} H \ket{\omega} \leq \frac{C}{256(T+1)^{5}}$, and proves the soundness property. 

To conclude, when $c=1-\frac{C}{512(T+1)^{4}}$ and $s=\frac{1}{T+1}$, we showed that in a $\YES$ case, there exists a tensor product state with energy at most $\frac{C}{512(T+1)^{5}}$, and in a $\NO$ case, all tensor product states have energy at least $\frac{C}{256(T+1)^{5}}$, which completes the proof of Thm.~\ref{thm:ssh_qmatc}.

\COMMENT{
\onote{compltete the proof by combining the three lemmas with the right parameters.}
\begin{proofof}{Claim~\ref{cl:close_rejecting_probability}}
\begin{fact}[\cite{NC00}]
\label{fa:trace_distance}
\[D(\rho,\sigma) \equiv \frac{1}{2}\tr{|\rho-\sigma|)} = max_\Pi \tr{\Pi (\rho-\sigma)}.\]
\end{fact}
\begin{fact}[\onote{cite N. and C. sec. 9.2.3}]
\label{fa:trace_and_fidelity}
For two pure states, $\ket{a},\ket{b}$,
\[D(\ketbra{a}{a},\ketbra{b}{b}) = \sqrt{1-|\braket{a}{b}|^2} \]
\end{fact}
By definition, $p_i = \tr{\Pi \ketbra{\psi_i}{\psi_i}}$. 
\[p_1-p_2 = \frac{1}{2}\tr{\Pi (\ketbra{\psi_1}{\psi_1}-\ketbra{\psi_2}{\psi_2})} \leq 2 D(\ketbra{\psi_1}{\psi_1},\ketbra{\psi_2}{\psi_2})\]
\[ = 2 \sqrt{1-|\braket{\psi_1}{\psi_2}|^2} \leq 2 \sqrt{\epsilon}, \]
where in the first inequality we used Fact~\ref{fa:trace_distance} and in the second equality we used Fact~\ref{fa:trace_and_fidelity}.
\end{proofof}
}
\section{\texorpdfstring{Proof that \sklh\ is $\QMAC$}{Proof that Separable k-local Hamiltonian is QMA-Complete}}\label{sec:local}
In this Section, we show that the promise problem \sklh\ is $\QMAC$. Let $H_1,\dots,H_m$ an instance of \sklh. We partition the workspace of qubits into disjoints sets $A$ and $B$, each corresponding to $n/2$ qubits. Let $A_i \subset A$ (resp. $B_i \subset B$) the space of qubits in $\spa{A}$ (resp. $\spa{B}$) on which $H_i$ acts. $H_i$ acts on $k$ qubits represented by the space $A_i \otimes B_i$. We can have $A_i = \emptyset$ or $B_i = \emptyset$. Let $H = \sum_i H_i$ (where the summation is over the extension of the $H_i$'s to the entire Hilbert space). The size of the instance is $N = n + m \cdot 2^k$. The term $2^k$ follows from the need of $O(2^k)$ classical bits to describe a $k$-local Hamiltonian.

We use Definition~\ref{def:AlternativeDefinition1} to characterize \sklh\onote{Andre: the previous version was \QMAT. This was a typo, right?}. We are in a $\YES$ instance if there exists $\rho = \rho_A \otimes \rho_B$ such that $\tr{H \rho} \le a$, with $\rho_A \in \spa{A}$ and $\rho_B\in \spa{B}$. We are in a $\NO$ instance if for all $\rho = \rho_A \otimes \rho_B$ with $\rho_A \in \spa{A}$ and $\rho_B\in \spa{B}$, we have $\tr{H \rho} \ge b$

Note that this problem is $\QMAH$ hard. Indeed, if we consider that all the Hamiltonians $H_i$ act only on $\spa{A}$, which means that for all $i$, $B_i = \emptyset$, we obtain an instance of the $k$-local Hamiltonian problem which is $\QMAC$ hence \sklh\ is $\QMAH$. It remains to be shown that \sklh\ is in $\QMA$. 

To show this, we use the fact that another problem, \cldm\ (see Definition~\ref{def:cldm}), is in $\QMA$. More precisely, we consider the \cldm\ problem with $\beta = \frac{b - a}{8m}$. It was shown by Liu~\cite{liu_consistency_2006} that this problem is in $\QMA$. We now describe the $\QMA$ procedure for \sklh. \\
\cadre{ \begin{center}
$\QMA$ protocol for $\sklh$ \end{center}
Let $H_1\dots,H_m$ an instance of $\sklh$. Suppose this is a $\YES$ instance.
Let $\rho = \rho_A \otimes \rho_B$ such that $\tr{H \rho} \le a$
For each Hamiltonian $H_i$, do the following:
\begin{itemize}
\item The prover sends a classical description of the state $\rho_i = \rho^{A_i} \otimes \rho^{B_i}$ where $\rho^{A_i} = \Tr_{A/A_i}(\rho_A)$ and $\rho^{B_i} = \Tr_{B/B_i}(\rho_B)$. This requires sending $O(m\cdot 2^k) = O(N)$ classical bits. 
\item The prover proves to the verifier that the reduced density matrices $\{\rho^{A_i}\}_{i \in [1,m]}$ form a $\YES$ instance of the \cldm\ problem. He also proves that the reduced density matrices $\{\rho_{B_i}\}_{i \in [1,m]}$ form a $\YES$ instance of the \cldm\ problem. 
\item Once the verifier is convinced that the reduced density matrices are consistent, he calculates the value $E = \sum_i \tr{H_i \rho_i}$ and accepts if $E \le a$.
\end{itemize}
}
\subsection{Proof that the protocol works}
\begin{proof}
\textbf{Completeness:} Suppose we are in a $\YES$ instance. This means that there exists $\rho = \rho_A \otimes \rho_B$ such that $\tr{H \rho} \le a$.  The prover sends a classical description of the $\rho_{A_i}$ and $\rho_{B_i}$ where $\rho_{A_i} = \Tr_{A/A_i}(\rho_A)$ and $\rho_{B_i} = \Tr_{B/B_i}(\rho_B)$. Clearly, these reduced density matrices are consistent with $\rho_A$ and $\rho_B$ so the consistency test will pass with probability greater than $2/3$. Then, we have 
\begin{align*}
\tr{H \rho} = \sum_i \tr{H_i \rho} = \sum_i \tr{H_i(\rho_{A_i} \otimes \rho_{B_i})} = E \le a. 
\end{align*}
We conclude that the verifier will accept with probability at least $\frac{2}{3}$.
 
\textbf{Soundness:} Suppose we are in a $\NO$ instance. The prover sends classical descriptions of the states $\rho_{A_i},\rho_{B_i}$. We distinguish two cases:
\begin{itemize}
\item These reduced density matrices fail the consistency test. The verifier accepts with probability smaller than $\frac{1}{3}$.
\item These reduced density matrices pass the consistency test with probability at least $\frac{2}{3}$. This means that there exist two quantum states $\sigma_A,\sigma_B$ such that if we define $\sigma_{A_i} = \Tr_{A/A_i} (\sigma_A)$ and $\sigma_{B_i} = \Tr_{B/B_i} (\sigma_B)$, we have :
\end{itemize}
\begin{align*}
\forall i, \ || \sigma_{A_i} - \rho_{A_i} ||_1 \le \frac{b - a}{8m} \ \mbox{and} \ 
|| \sigma_{B_i} - \rho_{B_i} ||_1 \le \frac{b - a}{8m}.
\end{align*}
Since we are in a $\NO$ instance, for every $\sigma_{A}, \sigma_{B}$ we have 
\begin{align*}
\tr{H (\sigma_A \otimes \sigma_B)} = \sum_i \tr{H_i (\sigma_{A_i} \otimes \sigma_{B_i})} \ge b.
\end{align*}
For each $i$, we have $||(\rho_{A_i} \otimes \rho_{B_i}) - (\sigma_{A_i} \otimes \sigma_{B_i})||_1 \le ||\rho_{A_i} - \sigma_{A_i}||_1 + ||\rho_{B_i} - \sigma_{B_i}||_1 \le \frac{b - a}{4m}$, where the first inequality follows from the subadditivity of the trace distance with respect to tensor products. We now use the following Claim
\begin{claim}[\cite{NC00}]
Let $\rho,\sigma$ two quantum states with $||\rho - \sigma||_1 = \delta$. We have for any positive semidefinite matrix $H \le I, |tr(H(\rho)) - tr(H(\sigma))| \le \delta/2$.
\end{claim}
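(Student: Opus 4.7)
The plan is to use the variational characterization of the trace distance from Nielsen and Chuang, namely $D(\rho,\sigma) = \frac{1}{2}||\rho - \sigma||_1 = \max_{0 \preceq P \preceq I}\tr{P(\rho - \sigma)}$, which was also invoked earlier in the proof of Claim~\ref{cl:close_rejecting_probability}. The key observation is that this maximum is taken over the full set of positive operators bounded by $I$, not merely over projectors, so the resulting bound applies to any Hamiltonian term satisfying $0 \preceq H \preceq I$, which is exactly the hypothesis under which we want the inequality.

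Concretely, I would proceed in two short steps. First, for any such $H$, apply the characterization directly to get $\tr{H(\rho-\sigma)} \leq \max_{0 \preceq P \preceq I}\tr{P(\rho-\sigma)} = \delta/2$. Second, swap the roles of $\rho$ and $\sigma$ to obtain $-\tr{H(\rho-\sigma)} = \tr{H(\sigma-\rho)} \leq \delta/2$. Combining the two inequalities yields $|\tr{H\rho} - \tr{H\sigma}| \leq \delta/2$, which is exactly the claim.

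If one prefers a self-contained derivation that does not invoke the variational characterization as a black box, an alternative is to use the Jordan decomposition $\rho - \sigma = Q_+ - Q_-$, where $Q_+, Q_- \succeq 0$ are supported on orthogonal subspaces. From $\tr{\rho-\sigma} = 0$ and $||\rho-\sigma||_1 = \tr{Q_+} + \tr{Q_-} = \delta$ one deduces $\tr{Q_+} = \tr{Q_-} = \delta/2$; then $0 \preceq H \preceq I$ together with $Q_\pm \succeq 0$ forces each of $\tr{HQ_+}$ and $\tr{HQ_-}$ to lie in $[0,\delta/2]$, so $\tr{H(\rho-\sigma)} = \tr{HQ_+} - \tr{HQ_-}$ lies in $[-\delta/2,\delta/2]$. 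There is no genuine obstacle here — this is a standard fact cited to Nielsen and Chuang — and the only subtle point is to remember that the maximum in the definition of trace distance can be taken over the full cone of positive contractions, which is precisely what allows the inequality to apply to the Hamiltonian terms $H_i$ appearing in the $\sklh$ instance rather than only to measurement projectors.
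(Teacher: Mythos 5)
Your proposal is correct. The paper does not actually prove this claim --- it is stated as a black-box citation to Nielsen and Chuang --- so there is no internal proof to compare against; your argument via the variational characterization $\frac{1}{2}\|\rho-\sigma\|_1 = \max_{0 \preceq P \preceq I}\tr{P(\rho-\sigma)}$ (applied in both directions), as well as your self-contained Jordan-decomposition variant, is exactly the standard derivation underlying the cited result, and both the hypotheses ($0 \preceq H \preceq I$) and the factor of $\delta/2$ come out right.
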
 
Since we do have $H_i$ positive semi definite and $H_i \le I$ for each $i$, we have for all $i$
\begin{align*}
\tr{H_i (\rho_{A_i} \otimes \rho_{B_i})} \ge \tr{H_i (\sigma_{A_i} \otimes \sigma_{B_i})} - \frac{b-a}{2m}.
\end{align*}
Putting this all together, we have
\begin{align*}
E = \sum_{i} \tr{H_i (\rho_{A_i} \otimes \rho_{B_i})} \ge \sum_i \tr{H_i (\sigma_{A_i} \otimes \sigma_{B_i})} - m\cdot \frac{b - a}{2m} \ge b - \frac{a - b}{2} = \frac{a+b}{2},
\end{align*}
therefore, the verifier rejects\onote{if we use the same analysis, the acceptance condition must be $<\frac{a+b}{2}$ and not $\leq \frac{a+b}{2}$}.
\end{proof}
}

\section*{Acknowledgments}
We thank Fernando Brand\~{a}o for his contribution to the soundness proof of Thm.~\ref{thm:ssh_qmatc}.

\bibliographystyle{alpha}
\bibliography{separable_hamiltonian}

\newcommand{\etalchar}[1]{$^{#1}$}
\begin{thebibliography}{ABD{\etalchar{+}}08}

\bibitem[AALV09]{AALV09}
D.~Aharonov, I.~Arad, Z.~Landau, and U.~Vazirani.
\newblock The detectability lemma and quantum gap amplification.
\newblock In {\em Proceedings of the 41st annual ACM symposium on Theory of
  computing}, STOC '09, pages 417--426, New York, NY, USA, 2009. ACM.

\bibitem[ABD{\etalchar{+}}08]{ABD+08}
S.~Aaronson, S.~Beigi, A.~Drucker, B.~Fefferman, and P.W. Shor.
\newblock The power of unentanglement.
\newblock {\em Electronic Colloquium on Computational Complexity (ECCC)},
  15(051), 2008.

\bibitem[AN02]{aharonov2002qns}
D.~Aharonov and T.~Naveh.
\newblock {Quantum NP-A Survey}.
\newblock {\em Arxiv preprint quant-ph/0210077}, 2002.

\bibitem[ATS03]{AT03}
D.~Aharonov and A.~Ta-Shma.
\newblock Adiabatic quantum state generation and statistical zero knowledge.
\newblock In {\em Proceedings of the thirty-fifth annual ACM symposium on
  Theory of computing}, STOC '03, pages 20--29, New York, NY, USA, 2003. ACM.

\bibitem[BCY11]{BCY11}
F.G.S.L. {Brand\~{a}o}, M.~Christandl, and J.~Yard.
\newblock A quasipolynomial-time algorithm for the quantum separability
  problem.
\newblock In {\em Proceedings of the 43rd annual ACM symposium on Theory of
  computing}, STOC '11, pages 343--352, New York, NY, USA, 2011. ACM.

\bibitem[Bei10]{Bei10}
S.~Beigi.
\newblock {NP} vs {QMA\_log(2)}.
\newblock {\em Quantum Info. Comput.}, 10:141--151, January 2010.

\bibitem[BFL91]{babai1991non}
L.~Babai, L.~Fortnow, and C.~Lund.
\newblock Non-deterministic exponential time has two-prover interactive
  protocols.
\newblock {\em Computational complexity}, 1(1):3--40, 1991.

\bibitem[BT09]{BT09}
H.~Blier and A.~Tapp.
\newblock All languages in np have very short quantum proofs.
\newblock In {\em Proceedings of the 2009 Third International Conference on
  Quantum, Nano and Micro Technologies}, pages 34--37, Washington, DC, USA,
  2009. IEEE Computer Society.

\bibitem[FvdG99]{FG99}
C.~A. Fuchs and J.~van~de Graaf.
\newblock Cryptographic distinguishability measures for quantum-mechanical
  states.
\newblock {\em ITIT}, 45(4):1216--1227, 1999.

\bibitem[HM10]{harrow2010efficient}
A.W. Harrow and A.~Montanaro.
\newblock An efficient test for product states with applications to quantum
  merlin-arthur games.
\newblock In {\em 2010 IEEE 51st Annual Symposium on Foundations of Computer
  Science}, pages 633--642. IEEE, 2010.

\bibitem[KKR04]{KKR04}
J.~Kempe, A.~Kitaev, and O.~Regev.
\newblock The complexity of the local hamiltonian problem.
\newblock {\em In Proc. of 24th FSTTCS}, 35:372--383, 2004.

\bibitem[KMY03]{kobayashi2003quantum}
H.~Kobayashi, K.~Matsumoto, and T.~Yamakami.
\newblock Quantum merlin-arthur proof systems: Are multiple merlins more
  helpful to arthur?
\newblock {\em Algorithms and Computation}, pages 189--198, 2003.

\bibitem[KSV02]{KSV02}
A.~Y. Kitaev, A.~H. Shen, and M.~N. Vyalyi.
\newblock {\em Classical and Quantum Computation}.
\newblock American Mathematical Society, Boston, MA, USA, 2002.

\bibitem[LCV07]{liu2007quantum}
Y.K. Liu, M.~Christandl, and F.~Verstraete.
\newblock Quantum computational complexity of the n-representability problem:
  Qma complete.
\newblock {\em Physical review letters}, 98(11):110503, 2007.

\bibitem[LFKN92]{lund1992algebraic}
C.~Lund, L.~Fortnow, H.~Karloff, and N.~Nisan.
\newblock Algebraic methods for interactive proof systems.
\newblock {\em Journal of the ACM (JACM)}, 39(4):859--868, 1992.

\bibitem[Liu06]{liu_consistency_2006}
{Y.K.} Liu.
\newblock Consistency of local density matrices is {QMA-Complete}.
\newblock In {\em Approximation, Randomization, and Combinatorial Optimization.
  Algorithms and Techniques}, pages 438--449. Springer Berlin, 2006.

\bibitem[LNN11]{LNN11}
F.~{Le Gall}, S.~{Nakagawa}, and H.~{Nishimura}.
\newblock {On QMA Protocols with Two Short Quantum Proofs}.
\newblock {\em Arxiv preprint arXiv:1108.4306}, August 2011.

\bibitem[MW05]{MW05}
C.~Marriott and J.~Watrous.
\newblock Quantum arthur-merlin games.
\newblock {\em Computational Complexity}, 14:122, 2005.

\bibitem[NC00]{NC00}
M.A. Nielsen and I.L. Chuang.
\newblock {\em Quantum computation and quantum information}.
\newblock Cambridge University Press, New York, NY, USA, 2000.

\bibitem[Osb11]{osborne2011hamiltonian}
T.J. Osborne.
\newblock Hamiltonian complexity.
\newblock {\em Arxiv preprint arXiv:1106.5875}, 2011.

\bibitem[Per91]{P91}
S.~Perlis.
\newblock {\em Theory of matrices}.
\newblock Dover publications, 1991.

\bibitem[Sha92]{shamir1992ip}
A.~Shamir.
\newblock {IP}= {PSPACE}.
\newblock {\em Journal of the ACM (JACM)}, 39(4):869--877, 1992.

\bibitem[Wat00]{Wat00}
J.~Watrous.
\newblock Succinct quantum proofs for properties of finite groups.
\newblock {\em Proceedings of the 41st FOCS}, pages 537 -- 546, 2000.

\bibitem[WZ06]{WZ06}
P.~Wocjan and S.~Zhang.
\newblock Several natural {BQP}-complete problems.
\newblock {\em Arxiv preprint quant-ph/0606179}, 2006.

\end{thebibliography}
\fullversion{
\appendix
\section{Lower bounding the spectral gap}
\label{app:spectral_gap}
In this appendix we prove Claim \ref{cl:gap}. The proof has a similar structure to the one in \cite{KSV02}.
We will first need a few definitions. Given a Hilbert space $\HH$ and a subspace $\LL$, the subspace $\LL^{\perp}$ is the orthogonal complement of the subspace $\LL$ (see e.g. \cite{P91}). Given two subspaces $\LL_1,\LL_2$, the angle $0\leq \theta(\LL_1,\LL_2) \leq \frac{\pi}{2}$ between the subspaces is:
\[ \cos(\theta) \equiv \max_{\ket{\psi_1}\in \LL_1,\ \ket{\psi_2} \in \LL_2} |\braket{\psi_1}{\psi_2}|.\]

Given a Hamiltonian $A \succeq 0$, we define $\Delta(A)$ to be the smallest non-zero eigenvalue of $A$. We use the notation $A \succeq c$ as a shorthand for $A - c I \succeq 0$.
\begin{lemma}[\protect{\cite[Lemma 14.4]{KSV02}}]
\label{le:kitaev_geometric_lemma}
Let $A_1,A_2$ be positive-semidefinite operators, and $\LL_1,\LL_2$ their null subspaces respectively, where $\LL_1 \cap \LL_2 = \{0\}$. Suppose further that $\Delta(A_{1})\geq v$ and $\Delta(A_{2})\geq v$. Then,
\[ A_{1} + A_{2} \succeq v (1-\cos(\theta)), \]
where $\theta$ is the angle between $\LL_{1}$ and $\LL_{2}$.
\end{lemma}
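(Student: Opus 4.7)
The plan is to reduce the inequality to an operator bound on a sum of two orthogonal projectors. Let $\Pi_{i}$ be the orthogonal projector onto $\LL_{i}$. The hypothesis that $A_{i}$ is positive semi-definite with null space $\LL_{i}$ and smallest nonzero eigenvalue at least $v$ can be restated as the operator inequality $A_{i} \succeq v(I - \Pi_{i})$. Summing over $i = 1,2$ yields
\[
A_{1} + A_{2} \;\succeq\; v\bigl(2 I - \Pi_{1} - \Pi_{2}\bigr),
\]
so the claim reduces to showing $\Pi_{1} + \Pi_{2} \preceq (1 + \cos\theta)\, I$, i.e.\ that the top eigenvalue of $\Pi_{1} + \Pi_{2}$ is at most $1 + \cos\theta$.

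For the projector inequality I would use an elementary variational argument. Fix an arbitrary unit vector $\ket{\psi}$ and set $\ket{\psi_{i}} = \Pi_{i} \ket{\psi}$, so that $\bra{\psi}(\Pi_{1} + \Pi_{2})\ket{\psi} = \|\psi_{1}\|^{2} + \|\psi_{2}\|^{2}$. Since $\braket{\psi}{\psi_{i}} = \|\psi_{i}\|^{2}$, Cauchy--Schwarz applied to $\braket{\psi}{\psi_{1}+\psi_{2}}$ gives $\|\psi_{1}\|^{2} + \|\psi_{2}\|^{2} \leq \|\psi_{1} + \psi_{2}\|$. Squaring this and expanding, then using the definition of $\theta$ in the form $|\braket{\psi_{1}}{\psi_{2}}| \leq \cos\theta \cdot \|\psi_{1}\|\|\psi_{2}\|$ (because the normalized kets lie in $\LL_{1}$ and $\LL_{2}$ respectively) together with the AM--GM estimate $2\|\psi_{1}\|\|\psi_{2}\| \leq \|\psi_{1}\|^{2} + \|\psi_{2}\|^{2}$, leads to
\[
\bigl(\|\psi_{1}\|^{2} + \|\psi_{2}\|^{2}\bigr)^{2} \;\leq\; (1 + \cos\theta)\,\bigl(\|\psi_{1}\|^{2} + \|\psi_{2}\|^{2}\bigr),
\]
from which $\|\psi_{1}\|^{2} + \|\psi_{2}\|^{2} \leq 1 + \cos\theta$ follows on dividing through (the zero case is trivial).

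Combining the two steps gives $A_{1} + A_{2} \succeq v\bigl(2 - (1 + \cos\theta)\bigr) I = v(1 - \cos\theta)\, I$, as required. The hypothesis $\LL_{1} \cap \LL_{2} = \{0\}$ enters only to guarantee $\cos\theta < 1$, so that the bound is strictly positive; if the intersection were nontrivial, the top eigenvalue of $\Pi_{1}+\Pi_{2}$ would equal $2$ and no positive lower bound on $A_{1}+A_{2}$ would be possible. The main technical point is the spectral bound on $\Pi_{1} + \Pi_{2}$; a more conceptual alternative is the canonical two-subspace (principal angles) decomposition, which exhibits $\Pi_{1} + \Pi_{2}$ as a direct sum of $2 \times 2$ blocks with eigenvalues $1 \pm \cos\theta_{i}$, where the $\theta_{i}$ are the principal angles and the smallest of them is precisely $\theta$. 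Either route completes the proof with no further subtleties.
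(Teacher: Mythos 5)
Your proof is correct. The paper itself does not prove this lemma---it is quoted from \cite[Lemma 14.4]{KSV02}---and your argument is essentially the standard one given there: reduce to the operator inequality $A_1+A_2\succeq v(2I-\Pi_1-\Pi_2)$ and bound the top eigenvalue of $\Pi_1+\Pi_2$ by $1+\cos\theta$ via the variational/principal-angles estimate. The only point worth flagging is that your closing remark (that $\LL_1\cap\LL_2=\{0\}$ serves only to force $\cos\theta<1$) relies on the maximum in the definition of $\theta$ being attained, which holds in the finite-dimensional setting of the paper.
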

We will use a slightly different version.
\begin{corollary}
\label{co:geometric_lemma}
Let $A_1,A_2$ be positive-semidefinite operators, and $\LL_1,\LL_2$ their null subspaces respectively, where $\LL_1 \cap \LL_2 \equiv \LL$. Suppose further that $\Delta(A_{1})\geq v$ and $\Delta(A_{2})\geq v$. Then,
\[ \Delta(A_{1} + A_{2}) \geq v (1-\cos(\theta)), \]
where $\theta$ is the angle between $\LL_{1} \cap \LL^{\perp}$ and $\LL_{2} \cap \LL^{\perp}$. 
\end{corollary}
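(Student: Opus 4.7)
The strategy is to reduce directly to Kitaev's original lemma (Lemma~\ref{le:kitaev_geometric_lemma}) by restricting both operators to $\LL^{\perp}$, where $\LL = \LL_1 \cap \LL_2$ is the common null subspace. The key observation is that since $\LL \subseteq \LL_i$ for $i=1,2$, both $A_1$ and $A_2$ vanish identically on $\LL$, and therefore they preserve the orthogonal decomposition $\HH = \LL \oplus \LL^{\perp}$. This means we can write $A_i = 0 \oplus A_i'$ where $A_i' = A_i|_{\LL^{\perp}}$.

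First I would verify the two hypotheses for applying Kitaev's lemma to $A_1', A_2'$ acting on $\LL^{\perp}$. The null subspace of $A_i'$ inside $\LL^{\perp}$ is exactly $\LL_i \cap \LL^{\perp}$, since a vector $v \in \LL^{\perp}$ satisfies $A_i' v = A_i v = 0$ iff $v \in \LL_i$. The intersection of these two null subspaces is
\[ (\LL_1 \cap \LL^{\perp}) \cap (\LL_2 \cap \LL^{\perp}) = (\LL_1 \cap \LL_2) \cap \LL^{\perp} = \LL \cap \LL^{\perp} = \{0\}, \]
so the trivial-intersection hypothesis of Lemma~\ref{le:kitaev_geometric_lemma} holds. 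Next, the non-zero eigenvalues of $A_i'$ coincide with the non-zero eigenvalues of $A_i$ (removing zeros on $\LL$ does not create any new small positive eigenvalues), so $\Delta(A_i') \geq \Delta(A_i) \geq v$, giving the spectral-gap hypothesis.

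Applying Kitaev's lemma to $A_1', A_2'$ on $\LL^{\perp}$ yields
\[ A_1' + A_2' \succeq v\bigl(1 - \cos(\theta)\bigr), \]
where $\theta$ is the angle between the null subspaces $\LL_1 \cap \LL^{\perp}$ and $\LL_2 \cap \LL^{\perp}$, matching the angle defined in the corollary's statement. Finally, since $A_1 + A_2$ decomposes as $0 \oplus (A_1' + A_2')$, its spectrum is the union of $\{0\}$ (with multiplicity $\dim \LL$) and the spectrum of $A_1' + A_2'$. The latter is bounded below by $v(1-\cos\theta) > 0$ on $\LL^{\perp}$, so the smallest non-zero eigenvalue of $A_1 + A_2$ satisfies $\Delta(A_1 + A_2) \geq v(1-\cos\theta)$, as claimed.

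There is no real obstacle here: the corollary is essentially a cosmetic generalization of the original lemma, and the only thing to be careful about is distinguishing the statement $A_1 + A_2 \succeq c$ (which fails when the common null space $\LL$ is non-trivial) from the weaker statement $\Delta(A_1+A_2) \geq c$ (which only concerns the non-zero part of the spectrum). Splitting off $\LL$ at the start cleanly resolves this subtlety and lets the original lemma do the real work.
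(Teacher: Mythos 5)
Your proof is correct and follows exactly the route the paper takes (the paper's entire proof is the single sentence ``apply Lemma~\ref{le:kitaev_geometric_lemma} to $A_1$ and $A_2$ with domain and codomain restricted to $\LL^{\perp}$''); you have simply filled in the routine verifications --- that positive semidefiniteness forces $A_i$ to respect the splitting $\HH = \LL \oplus \LL^{\perp}$, that the restricted null spaces are $\LL_i \cap \LL^{\perp}$ with trivial intersection, and that restriction does not shrink the spectral gap. No gaps.
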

The corollary follows from applying Lemma \ref{le:kitaev_geometric_lemma}, to $A_{1}$ and $A_{2}$ with the domain and codomain restricted to $\LL^{\perp}$.


We use Cor.~\ref{co:geometric_lemma} where we substitute $A_{1}= W^{\dagger} H_{in} W$ and $A_{2}= W^{\dagger} H_{prop} W$, where
\[ W= \sum_{t=0}^{T} \ketbra{t}{t} \tensor U_{t}\ldots U_{1}. \]
In this case, the analysis in \cite[Eq. (14.15),(14.16)]{KSV02} shows that $v \geq \frac{c'}{(L+1)^{2}}$ , and we will show that 
\begin{equation}
\label{eq:cos_theta}
\cos^{2}(\theta) \leq 1 - \frac{1}{T+1},
\end{equation}
  which together gives the desired result. 

It can be verified (see  \cite[Eq. (14.13),(14.14)]{KSV02} for a full analysis) that 
\[ \LL_{1} = \ket{0}^{C} \tensor \ket{0^{m}}^{A} \tensor \HH^{P_1,P_2} \bigoplus_{i=1}^{2^{m}-1} \ket{i}^{C} \tensor \HH^{A,P_1,P_2}\]
\[\LL_{2} = \ket{\alpha}^{C} \tensor \HH^{A,P_1,P_2}, \]
where $\ket{\alpha}=\frac{1}{\sqrt{T+1}}\sum_{t=0}^{T} \ket{t}$, and the superscripts denote the subsystems (see Lemma~\ref{le:step_two}). Therefore,
\[ \LL \equiv \LL_{1} \cap \LL_{2} = \ket{\alpha }^{C} \tensor \ket{0^{m}} ^{A}\tensor \HH^{P_1,P_2} \]
\[ \LL_{2} \cap \LL^{\perp} = \ket{\alpha}^{C}\tensor (\bigoplus_{i=1}^{2^{m}-1} \ket{i}^{A} ) \tensor \HH^{P_1,P_2}, \]
 Using these definitions we get:
\begin{align}
\cos^{2}(\theta)&=\max_{\ket{\psi_{1}} \in \LL_{1} \cap \LL^{\perp},\  \ket{\psi_{2}} \in \LL_{2} \cap \LL^{\perp}}  |\braket{\psi_{1}}{\psi_{2}} |^{2} \\
&\leq \max_{\ket{\psi_{1}} \in \LL_{1},\  \ket{\psi_{2}} \in \LL_{2} \cap \LL^{\perp}}  |\braket{\psi_{1}}{\psi_{2}} |^{2} \\
&= \max_{\ket{\psi} \in \LL_{2} \cap \LL^{\perp}}  \bra{\psi} \Pi_{\LL_{1}}\ket{\psi},
\end{align}

where $\Pi_{\LL_1} = \ketbra{0}{0}^{C} \tensor \ketbra{0^{m}}{0^{m}}^{A} \tensor I ^{P} + \sum_{t=1}^{T}\ketbra{t}{t}^{C}\tensor I^{A,P_1,P_2}$ is the projection onto the space $\LL_{1}$. Any state $\ket{\psi} \in \LL_{2}\cap \LL^{\perp}$ can be written in the form $\ket{\alpha}^{C} \tensor \ket{\beta}^{A,P_1,P_2}$, where $\ketbra{0^{m}}{0^{m}}^{A} \tensor I^{P_1,P_2} \ket{\beta} = 0$, therefore we can further bound:
\begin{align*}
\cos^{2}(\theta)&\leq \bra{\alpha} \tensor \bra{\beta}  \ketbra{0}{0}^{C} \tensor \ketbra{0^{m}}{0^{m}}^{A} \tensor I ^{P_1,P_2}  \ket{\alpha} \tensor \ket{\beta} + \bra{\alpha} \tensor \bra{\beta} \sum_{t=1}^{T}\ketbra{t}{t}^{C}\tensor I^{A,P_1,P_2} \ket{\alpha} \tensor \ket{\beta} \\
&= 1 - \frac{1}{T+1},
\end{align*}
where the equality follows from the observation that the first term is $0$ (see the property of $\ket{\beta}$ mentioned above), and that $\bra{\alpha} \sum_{t=1}^{T}\ketbra{t}{t}\ket{\alpha} = \frac{T}{T+1}$. This gives Eq.~\ref{eq:cos_theta}, and completes the proof.
}
\end{document}